\documentclass[pra,reprint,amsmath,amssymb,aps]{revtex4-2}

\usepackage[T1]{fontenc}
\usepackage[utf8]{inputenc}
\usepackage{graphicx}
\usepackage{dcolumn}
\usepackage{bm}
\usepackage{xcolor}
\usepackage{amsthm}
\usepackage{amsmath}
\usepackage{amssymb}
\usepackage{mathtools}
\usepackage{physics}
\usepackage[hidelinks]{hyperref}

\newtheorem{lemma}{Lemma}
\newtheorem{theorem}{Theorem}

\newtheorem{remark}{Remark}
\newtheorem{definition}{Definition}

\newcommand{\id}{\mathbb{I}}
\newcommand{\cH}{\mathcal{H}}
\newcommand{\cX}{\mathcal{X}}
\newcommand{\cY}{\mathcal{Y}}

\newcommand{\Dens}{\mathcal{D}}

\begin{document}

\title{Quantum work beyond classical (commuting) limits}
\author{Sumit Rout}
\thanks{These authors contributed equally to this work.}
\affiliation{International Centre for Theory of Quantum Technologies (ICTQT), University of Gda{\'n}sk, Jana Ba{\.z}ynskiego 8, 80-309 Gda{\'n}sk, Poland}
\author{Aravinth Balaji Ravichandran}
\thanks{These authors contributed equally to this work.}
\affiliation{International Centre for Theory of Quantum Technologies (ICTQT), University of Gda{\'n}sk, Jana Ba{\.z}ynskiego 8, 80-309 Gda{\'n}sk, Poland}
\author{Pawe{\l} Horodecki}
\affiliation{International Centre for Theory of Quantum Technologies (ICTQT), University of Gda{\'n}sk, Jana Ba{\.z}ynskiego 8, 80-309 Gda{\'n}sk, Poland}
\affiliation{Faculty of Applied Physics and Mathematics, Gda{\'n}sk University of Technology, Gabriela Narutowicza 11/12, 80-233 Gda{\'n}sk, Poland}
\author{Anubhav Chaturvedi}
\email{anubhav.chaturvedi@ug.edu.pl}
\affiliation{Division of Quantum Optics and Information, Institute of Theoretical Physics and Astrophysics, Faculty of Mathematics, Physics and Informatics, University of Gda\'nsk, 80-308 Gda\'nsk, Poland}

\begin{abstract}
Free energy fixes the maximum work of a thermodynamic process once the state and Hamiltonian are specified. 
A work-extraction task asks a different question: how much average work can a single device realize across several preparations and Hamiltonian settings? 
A classical work device is one whose Hamiltonian settings are mutually commuting. 
We place every branch at its best free-energy-limited work envelope and derive the corresponding classical limit on the task average. 
For pure preparations, the source is specified only by pairwise maximal-energy constraints: for each pair, the intrinsic maximal average energy under one common normalized Hamiltonian is bounded as part of the task data, while the work device is otherwise microscopically unrestricted. 
The benchmark is optimized over arbitrary-dimensional classical implementations. 
Incompatible Hamiltonian settings exceed this limit, even though every branch remains bounded by its own free-energy maximum. 
The advantage therefore does not arise in any single process, but in the average work of the task: incompatible Hamiltonians realize a value that no classical work device can attain. 
Hamiltonian incompatibility is thus a thermodynamic resource for work extraction.
\end{abstract}

\maketitle

\section{Introduction}
Thermodynamics sets the ultimate limits on the conversion of energy into useful work. 
At the microscopic scale, work extraction is governed by quantum thermodynamics: for a single process in contact with a bath at temperature \(T\)\cite{BreuerPetruccione2002}, once the input state and Hamiltonian are fixed, the maximum average extractable work is determined by the appropriate nonequilibrium free-energy difference \cite{Alicki1979,Spohn1978,HorodeckiOppenheim2013,SkrzypczykShortPopescu2014,Aberg2013,BrandaoHorodeckiNgOppenheimWehner2015,RichensMasanes2016,SafranekRosaBinder2023,WatanabeTakagi2026}. 
Thus, the work extractable from any single state-Hamiltonian instance is already fixed. 
A work-extraction task, however, does not concern a single process. 
It supplies several preparations to the same work device and probes several Hamiltonian settings. 
A task therefore requires a different limit: not the work extractable from one such instance, but the maximum average work that a single device can realize across the entire task when its Hamiltonian settings are required to commute. 
That is the missing classical limit of average work.

Existing quantum-thermodynamic work-extraction results close the single-process problem through free energy. 
Recent thermodynamic advantage protocols show that quantum resources can enhance work extraction or cooling in specific multi-setting tasks, including protocols based on steering, coherence, entanglement, and general resource-theoretic structure \cite{BiswasDattaGarciaPintos2025,HsiehGessner2024General,GarciaPintosBiswasDatta2026Robustness,BiswasDattaGarciaPintosCooling2025}. 
These works establish important quantum improvements, but their benchmarks are tied to specified resources, Hamiltonian architectures, or concrete work-extraction protocols. 
They do not give the device-level law needed here: the largest average work attainable by one arbitrary-dimensional work device when the only classical condition is mutual commutativity of the Hamiltonian settings.

This paper derives that law. 
The comparison is deliberately generous to classical devices. 
Branch by branch, every physical protocol is bounded by the corresponding free-energy maximum, and an imperfect protocol can only extract less \cite{KurizkiKofman2021}. 
We therefore evaluate the task at the best-case thermodynamic envelope. 
After this branch-wise envelope is fixed, the only remaining distinction is whether the Hamiltonian settings of one device commute. 
The resulting benchmark is an absolute classical average-work limit, not a bound on a particular classical implementation strategy.

On the device side, classicality is expressed entirely through the mutual commutativity of the Hamiltonian settings implemented across the task. 
No microscopic model of the work device is imposed. 
Its underlying Hilbert-space dimension, internal degrees of freedom, and work-extraction mechanism remain unrestricted. 
The classical comparator is therefore the full class of bounded Hamiltonian implementations satisfying this commutativity condition.

The task concerns the average over all branches. 
In each run it samples a preparation and a Hamiltonian setting; over repeated runs it estimates the prior-weighted task value. 
At the branch-wise free-energy envelope, entropy terms and partition-function terms are branch-local offsets fixed by the branch data. 
They do not encode whether different Hamiltonian settings of one device commute. 
The commutativity-sensitive part of the comparison is the internal-energy table \(\Tr(\rho_xH_y)\), and the classical law below is the optimal commuting bound on its task average under the same thermodynamic reference data.

A nontrivial device-level work law also requires thermodynamic source data. 
If the supplied preparations could encode an unconstrained branch label, the task average would impose no genuine limitation on one device. 
If the source energy scale were not fixed, a larger task average could simply reflect more supplied energy. 
For a thermodynamic law, the source restriction must therefore be expressed as normalized Hamiltonian-energy data rather than as an external dimension assumption or a postulated overlap constraint. 
We use pairwise maximal-energy constraints. 
For each pair of pure preparations, the task specifies constraints on the intrinsic quantity
\[
\eta_{xx'}
=
\max_{h\ge0,\operatorname{Tr}[h]=1}
\frac12\operatorname{Tr}\!\left[(\rho_x+\rho_{x'})h\right],
\]
where the same normalized Hamiltonian \(h\) is used for the two preparations in that pair. 
The task may impose either upper or lower constraints on this intrinsic pair quantity. 
For pure preparations, \(\eta_{xx'}=(1+|\langle\psi_x|\psi_{x'}\rangle|)/2\), so these thermodynamic constraints determine the pairwise transition amplitudes entering the task. 
Thus the source geometry is derived from normalized energy data rather than assumed microscopically. 
The right-hand side of a classical work law must therefore contain thermodynamic reference quantities: without them, a larger task average could reflect a different source energy scale rather than a failure of classical Hamiltonian compatibility. 
The operational structure of the task is shown in Fig.~\ref{fig:work-extraction-task}.
 
The first result is an exact classical average-work law for the three-preparation task. 
For three source preparations and two Hamiltonian settings, we derive the universal constraint obeyed by every arbitrary-dimensional device whose two Hamiltonian settings commute. 
Optimizing this law gives the absolute classical benchmark for the task. 
The second result shows that incompatible Hamiltonians exceed this benchmark, and that the symmetric source point and balanced reference averages are selected by the optimization rather than assumed. 
Finally, for balanced source families with many Hamiltonian settings, the classical benchmark becomes a simultaneous-alignment problem; this yields exact limiting robustness thresholds for noisy incompatible settings.

The principle generates a hierarchy of work-extraction tasks. 
The minimal task already separates incompatible Hamiltonians from every classical implementation. 
Larger thermodynamic source families strengthen the separation by turning the classical benchmark into a simultaneous-alignment problem. 
In the equatorial and full Bloch-sphere limits, this yields exact limiting robustness thresholds for noisy incompatible Hamiltonian settings. 
Together, these results establish exact classical limits for average work and identify Hamiltonian incompatibility as the thermodynamic resource behind quantum work advantage. 
Free energy fixes each branch; commutativity fixes the classical task average; Hamiltonian incompatibility violates that average-work law.

\begin{figure*}[t]
    \centering
    \includegraphics[width=\textwidth]{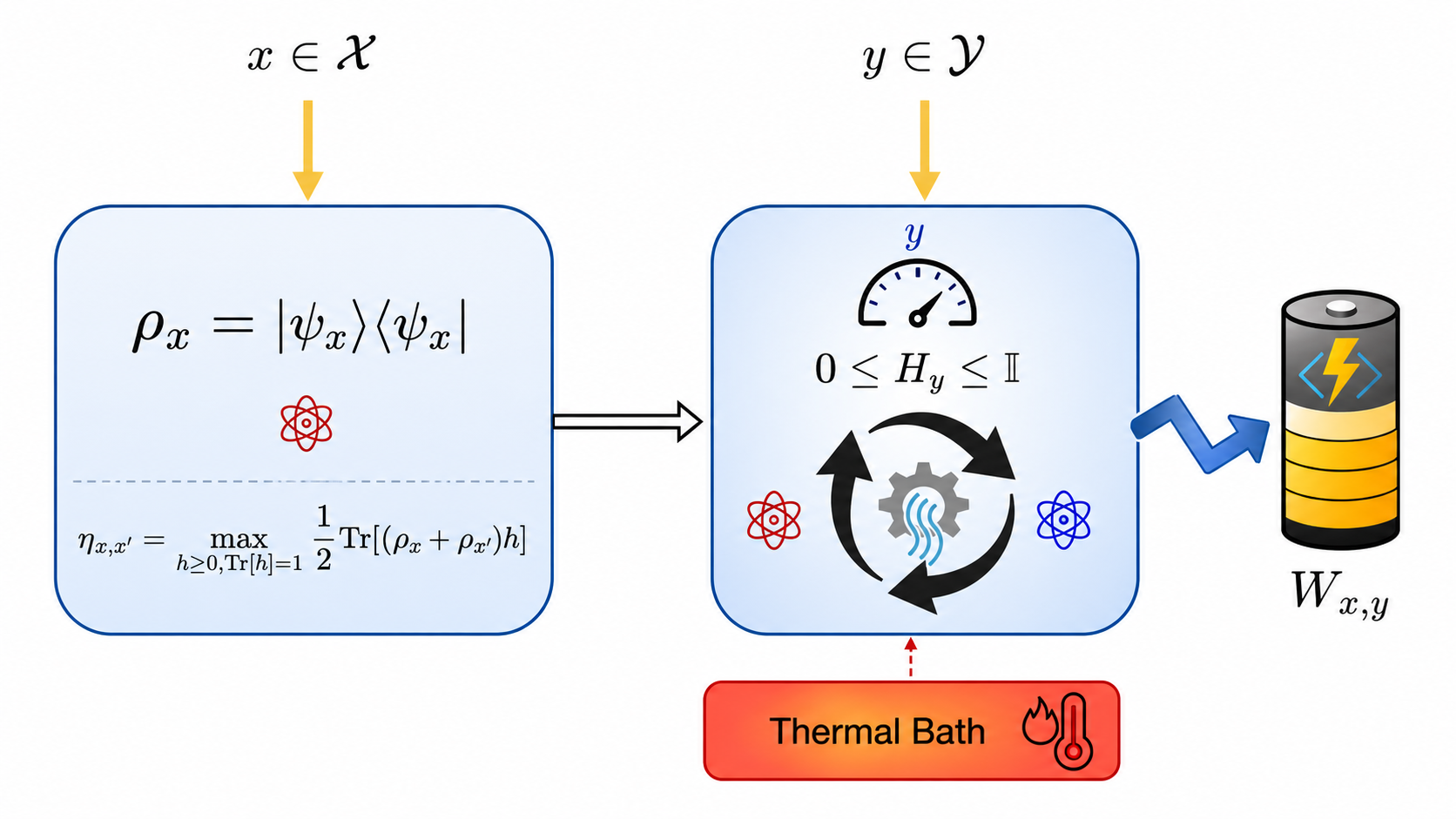}
    \caption{
    \textbf{Work-extraction task.}
    A pure preparation \(\ket{\psi_x}\) is supplied to a single work device, while the source side is constrained only by pairwise maximal-energy constraints under one common normalized Hamiltonian. 
    The device is queried through a bounded Hamiltonian setting \(0\le H_y\le \id\) and may use an arbitrary thermodynamic protocol involving a heat bath, internal control, and work storage. 
    Each run realizes one branch of the task, and repeated runs define the prior-weighted task average. 
    Classicality is imposed only by requiring the Hamiltonian settings of the same device to commute. 
    The paper derives the corresponding classical average-work law and shows that incompatible Hamiltonian settings violate it while every realized branch remains bounded by its own free-energy maximum.
    }
    \label{fig:work-extraction-task}
\end{figure*}

\section{Thermodynamic work extraction and the branch-wise free-energy bound}
\label{sec:absolute-law}

For a fixed thermodynamic process, the maximum extractable work question is already closed by free energy unless additional microscopic structure is introduced. 
Once the input state, Hamiltonian and bath temperature are specified, quantum thermodynamics fixes the maximum average work through the corresponding free-energy difference \cite{Alicki1979,Spohn1978}. 
Any microscopic protocol can at best attain this value reversibly \cite{AlickiHorodecki2004}. 
The role of this section is to recall this single-process law and to isolate why the advantage studied here must concern a task: it considers the average work realized by one device across several supplied preparations and Hamiltonian settings.

The first law relates the change in internal energy \(\Delta U\) of a system to the heat \(\Delta Q\) absorbed from the environment and the work \(\Delta W_{\rm ext}\) extracted through controlled external driving, \(\Delta U=\Delta Q-\Delta W_{\rm ext}\). 
For a quantum system with state \(\rho(t)\in\Dens(\mathbb{C}^d)\) and Hamiltonian \(H(t)\), the internal energy is \(U(t)=\Tr[H(t)\rho(t)]\), and an infinitesimal variation gives \(dU=\Tr(dH\,\rho)+\Tr(H\,d\rho)\). 
With the sign convention that extracted work is positive, \(\delta W_{\rm ext}=-\Tr(dH\,\rho)\) and \(\delta Q=\Tr(H\,d\rho)\). 
For a finite process over \(t\in[0,\tau]\), the extracted work is \(W_{\rm ext}=-\int_0^\tau dt\,\Tr[\dot H(t)\rho(t)]\).

When the system is coupled to a thermal bath at temperature \(T\), the relevant state function is the nonequilibrium free energy \(\mathcal{F}_T(\rho,H):=\Tr(\rho H)-T S(\rho)\) \cite{HorodeckiOppenheim2013,SkrzypczykShortPopescu2014,BrandaoHorodeckiNgOppenheimWehner2015}, where \(S(\rho)=-\Tr(\rho\log\rho)\). 
For a Hamiltonian \(H\), the Gibbs state is \(\tau_H=e^{-\beta H}/Z_H\), with partition function \(Z_H=\Tr(e^{-\beta H})\), \(k_B=1\), and \(\beta=1/T\). 
The Gibbs state minimizes the free energy, so \(\mathcal{F}_T(\rho,H)\ge \mathcal{F}_T(\tau_H,H)=-T\log Z_H\). 
Equivalently, \(\mathcal{F}_T(\rho,H)-\mathcal{F}_T(\tau_H,H)=T\,D(\rho\|\tau_H)\), where \(D(\rho\|\sigma)=\Tr[\rho(\log\rho-\log\sigma)]\) is the quantum relative entropy.

We use the standard upper bound on the average extractable work at fixed bath temperature \cite{HorodeckiOppenheim2013,SkrzypczykShortPopescu2014,Aberg2013,RichensMasanes2016,WatanabeTakagi2026}. 
For a system initially in state \(\rho\), with initial Hamiltonian \(H_i\), bath temperature \(T\), and final equilibrium reference state \(\tau_{H_f}\) associated with a final Hamiltonian \(H_f\), the extracted average work satisfies
\begin{equation}\label{eq:max_work}
W_{\rm ext}\leq \mathcal{F}_T(\rho,H_i)-\mathcal{F}_T(\tau_{H_f},H_f).
\end{equation}
Equivalently, \(W_{\rm ext}=\mathcal{F}_T(\rho,H_i)-\mathcal{F}_T(\tau_{H_f},H_f)-T\Sigma\), where \(\Sigma\ge0\) is the irreversible entropy production of the protocol. 
The bound is saturated in the reversible limit if the final state is \(\tau_{H_f}\). 
For cyclic processes, with \(H_f=H_i=H\), Eq.~\eqref{eq:max_work} gives \(W_{\rm ext}^{\max}=\mathcal{F}_T(\rho,H)-\mathcal{F}_T(\tau_H,H)=T\,D(\rho\|\tau_H)\).

Throughout the task comparison, we use this branch-wise maximum as the thermodynamic envelope. 
The physical device may implement any protocol, reversible or irreversible, and an imperfect protocol can only extract less work on a given branch. 
Thus the classical benchmark derived below is not a bound on a particular implementation strategy. 
It is the most favorable average work compatible with the branch-wise free-energy limit and with mutually commuting Hamiltonian settings. 
Consequently, any observed violation cannot be attributed to protocol inefficiency in the classical device; it certifies that the commuting Hamiltonian description itself is insufficient.

Equation~\eqref{eq:max_work} closes the single-process problem. 
For one fixed state and one fixed Hamiltonian, the reversible work is already determined by free energy difference; one branch cannot witness the separation studied here. 
The work extraction advantage studied in this article begins only when several such branches are assembled into one task using the same work device. 
For a branch specified by a supplied preparation and a Hamiltonian setting, the reversible average work separates into three terms: the internal energy of the preparation under that Hamiltonian, the entropy of the preparation, and the partition-function term of the Hamiltonian. 
The entropy and partition-function terms are branch-local thermodynamic offsets; they do not encode whether different Hamiltonian settings of one device commute. 
The commutativity-sensitive contribution is the collection of internal-energy terms entering the task average. 
The next section assembles these branch quantities into work-extraction tasks. 
Appendices~\ref{app:ergotropy-free-energy} and~\ref{app:free-energy-saturation} record two standard facts used in this separation: ergotropy is bounded by the same free-energy difference, and the reversible free-energy value is approached by an explicit reversible isothermal protocol.

\section{Single-device work-extraction tasks at the branch-wise thermodynamic envelope}
\label{sec:framework}

We now define the task value used in the device comparison. 
For each branch, the actual extracted work of any physical protocol is bounded by the branch-wise free-energy maximum. 
We compare devices at this best-case thermodynamic envelope: branch-local entropy and partition-function terms are fixed by the branch data and do not encode commutativity across Hamiltonian settings. 
The commutativity-sensitive contribution is the internal-energy term \(\Tr(\rho_xH_y)\). 
The task average below is therefore the average of these branch internal-energy contributions, optimized over the admissible Hamiltonian settings of the device.

\begin{definition}[Work-extraction task]
\label{def:work-extraction-task}
A work-extraction task specifies how one device is supplied, queried, and assigned an average work value at the branch-wise thermodynamic envelope. 
In each round, the task samples a branch \((x,y)\) from a prior \(p(x,y)\), supplies the preparation \(\rho_x\), and queries the Hamiltonian setting \(H_y\). 
The branch work value entering the task is the internal-energy contribution
\begin{equation}\label{eq:avg_work_task}
a_{x,y}:=\Tr(\rho_xH_y),
\qquad
\overline W_p:=\sum_{x,y}p(x,y)a_{x,y}.
\end{equation}
Here \(p(x,y)\ge0\) and \(\sum_{x,y}p(x,y)=1\). 
We call \(\overline W_p\) the average work of the task.
\end{definition}

The task specification consists of the supplied preparations, Hamiltonian-setting labels, branch prior, thermodynamic source constraints, and work scale. 
A physical work-extraction protocol realizes one branch in each run and may be imperfect. 
Such imperfections can only reduce the work obtained relative to the branch-wise thermodynamic envelope. 
The benchmark in this paper is deliberately the best-case one: it asks for the largest task average compatible with the source constraints, the bounded Hamiltonian settings, and, for classical devices, mutual commutativity. 
Thus the separation is not a comparison between a good quantum protocol and a bad classical protocol. 
It is a comparison between incompatible Hamiltonian settings and the optimal commuting Hamiltonian description under the same branch-wise thermodynamic limits.

For each \(x\in\cX\), the source supplies a pure preparation \(\rho_x=\ket{\psi_x}\!\bra{\psi_x}\) on an otherwise unrestricted Hilbert space. 
No microscopic source model is assumed. 
The source is specified only through pairwise maximal-energy data. 
For a pair of preparations, one and the same trace-normalized reference Hamiltonian is used for both members of the pair. 
This makes \(\eta_{xx'}\) a thermodynamic relation between two preparations rather than two independent single-branch energy values. 
We define
\begin{equation}\label{eq:pairwise_average_energy}
\eta_{xx'}
:=
\max_{h\ge0,\ \Tr [h]=1}
\frac{1}{2}\Tr\!\left[(\rho_x+\rho_{x'})h\right].
\end{equation}
The factor \(1/2\) makes \(\eta_{xx'}\) the maximal pair-averaged normalized energy attainable by one common Hamiltonian for the two preparations. 
Equivalently, \(\eta_{xx'}=\lambda_{\max}(\rho_x+\rho_{x'})/2\), and for pure preparations \(\eta_{xx'}=(1+|\braket{\psi_x}{\psi_{x'}}|)/2\). 
Thus, for pure preparations, upper and lower constraints on \(\eta_{xx'}\) are exactly upper and lower constraints on the corresponding transition amplitudes. 
The maximization over \(h\) is not an additional physical operation performed in the task; it defines the intrinsic maximal pair energy of the supplied preparations. 
When these pairwise maximal-energy constraints are nontrivial, the source cannot simply encode an independent classical label for every scored branch. 
The source geometry entering the work task is therefore fixed by thermodynamic pair constraints.

For each \(y\in\cY\), the device implements a bounded Hamiltonian setting \(H_y\) satisfying \(0\le H_y\le\id\). 
This boundedness is the work-scale normalization for the device Hamiltonians. 
Upon receiving the preparation \(\rho_x\) and setting label \(y\), the device may implement an arbitrary thermodynamic protocol involving the Hamiltonian setting \(H_y\), a heat bath at temperature \(T\), internal degrees of freedom, and work-storage systems. 
The device may also use an ancilla of arbitrary dimension prepared in a fixed state, say \(\ket{0}\!\bra{0}_A\). 
Since this ancillary state is fixed, we absorb it into the input description and continue to denote by \(\rho_x\) the resulting full input state on the device Hilbert space \(\cH\). 
No bound is imposed on this Hilbert-space dimension, and no microscopic model of the work-extraction mechanism is assumed.

The average work depends only on the supplied preparations and the Hamiltonian settings of the device. 
For a classical work device, the Hamiltonian settings appearing in \(\overline W_p\) must commute with each other. 
The following sections derive the exact upper limits that commutativity imposes on the maximum average work.

\section{Classical work devices and the average-work law}
\label{sec:classical-devices}

We now impose the classical condition on the work device. 
For a work-extraction task, the device is accessed through its Hamiltonian settings. 
The classical condition is that these settings commute with each other. 
This is the only device-side restriction: the Hilbert-space dimension, internal degrees of freedom, ancillas, and microscopic work-extraction protocol remain unrestricted.

\begin{definition}[Classical work device]
\label{def:classical-work-device}
A work device with bounded Hamiltonian settings \(\{H_y\}_{y\in\cY}\), \(0\le H_y\le \id\), is called \emph{classical} for the task if its Hamiltonian settings commute:
\begin{equation}
\label{eq:classical-work-device}
[H_y,H_{y'}]=0
\qquad
\forall\,y,y'\in\cY .
\end{equation}
\end{definition}

Operationally, a classical work device is one whose queried Hamiltonians form a mutually commuting algebra. 
No dimension bound or microscopic engine model is assumed. 
Classicality is imposed only through mutual commutativity of the Hamiltonian settings.

We first consider the minimal nontrivial task, which uses two preparations, denoted \(\ket{\psi_0}\) and \(\ket{\psi_1}\), and two Hamiltonian settings \(H_0,H_1\). 
The source also supplies a reference preparation \(\ket{\psi_\perp}\). 
This reference preparation supplies thermodynamic reference data for the commuting benchmark rather than a scored branch of the task.

The source side of the task is specified by pairwise maximal-energy constraints,
\[
\eta_{0\perp}\le \frac12,
\qquad
\eta_{01}\ge \frac{1+\cos\theta}{2},
\qquad
\eta_{\perp 1}\ge \frac{1+\sin\theta}{2},
\]
with \(0<\theta<\pi/2\). 
These are constraints on intrinsic pair quantities of the supplied preparations, not assumed state-vector equalities. 
Since the preparations are pure and normalized, \(\eta_{0\perp}\ge1/2\) always; hence the first bound enforces orthogonality of \(\ket{\psi_0}\) and \(\ket{\psi_\perp}\). 
The two lower constraints then require transition amplitudes at least \(\cos\theta\) and \(\sin\theta\) with this orthogonal pair. Because \(\cos^2\theta+\sin^2\theta=1\), normalization forces both constraints to be saturated simultaneously. 
The derivation is given in Appendix~\ref{app:main-bound-proof}. 
Thus the two-dimensional source geometry used below is derived from thermodynamic source-side bounds.

For \(i\in\{0,1,\perp\}\) and \(j\in\{0,1\}\), write \(a_{i,j}:=\langle\psi_i|H_j|\psi_i\rangle\). 
The task prior is supported only on the two scored branches \((x,y)=(0,0)\) and \((1,1)\), with probability \(1/2\) assigned to each. 
Hence \(\overline W_p=(a_{0,0}+a_{1,1})/2\). 
The reference preparation is not scored. 
Reference values enter only through the setting averages \(\mu_j:=(a_{0,j}+a_{\perp,j})/2\) and \(R_j:=\sqrt{\mu_j(1-\mu_j)}\), for \(j=0,1\). 
These quantities are retained in the optimization rather than fixed to a balanced work scale.

The following theorem is the classical average-work law for this task: it gives the universal classical average-work restriction imposed by mutual commutativity of the Hamiltonian settings on the scored branch average. 
The law is first stated as a constraint on the work values of any classical implementation. 
The scalar classical benchmark used below is obtained by optimizing this constraint over the admissible work values of classical devices.

\begin{theorem}[Classical average-work law]
\label{thm:commuting-average-bound}
For the three-preparation, two-Hamiltonian task above, every classical work device satisfies the following bound.  For \(0<\mu_0<1\), define \(Z_\theta:=\bigl[\cos(2\theta)(a_{1,0}-\mu_0)-(a_{0,0}-\mu_0)\bigr]/\sin(2\theta)\). Then
\begin{equation}
\label{eq:commuting-average-bound}
\overline W_p
\le
\frac12\left[
a_{0,0}
+
\mu_1
+
R_1\sqrt{1-\frac{Z_\theta^2}{R_0^2}}
\right],
\end{equation}
where \(\mu_j=(a_{0,j}+a_{\perp,j})/2\) and \(R_j=\sqrt{\mu_j(1-\mu_j)}\).  For \(\mu_0\in\{0,1\}\), the endpoint bound is obtained by replacing the square-root factor in Eq.~\eqref{eq:commuting-average-bound} by \(1\).  The bound holds for all bounded commuting Hamiltonian implementations, with no restriction on dimension or microscopic work-extraction mechanism.
\end{theorem}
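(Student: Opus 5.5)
The plan is to reduce every commuting implementation to one shared classical decomposition and then bound the cross terms by a Cauchy--Schwarz (Gram-matrix) argument in a real Hilbert space.

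\emph{Source geometry.} I first use the source constraints to fix the vectors, as derived in Appendix~\ref{app:main-bound-proof}. The condition \(\eta_{0\perp}\le1/2\) forces \(\langle\psi_0|\psi_\perp\rangle=0\). The two lower bounds, combined with normalization, then force
\[
\ket{\psi_1}=\cos\theta\ket{\psi_0}+e^{i\varphi}\sin\theta\ket{\psi_\perp}.
\]
The phase \(\varphi\) can be absorbed into \(\ket{\psi_\perp}\), since it enters neither \(\eta\) nor the \(a_{\perp,j}\). Any fixed ancilla is already part of \(\rho_x\), so nothing changes there.

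\emph{Common decomposition.} Since \([H_0,H_1]=0\) and \(0\le H_j\le\id\), the spectral theorem gives a common projection-valued measure \(\{\Pi_\lambda\}\) on the device space \(\cH\), of arbitrary dimension, with
\[
H_j=\sum_\lambda f_j(\lambda)\Pi_\lambda,\qquad f_j(\lambda)\in[0,1].
\]
A spectral integral works the same way. Put \(u_\lambda=\Pi_\lambda\ket{\psi_0}\) and \(w_\lambda=\Pi_\lambda\ket{\psi_\perp}\). Then
\[
a_{0,j}=\sum f_j\|u_\lambda\|^2,\qquad a_{\perp,j}=\sum f_j\|w_\lambda\|^2 .
\]
Expanding \(\ket{\psi_1}\) gives
\[
a_{1,j}=\cos^2\theta\,a_{0,j}+\sin^2\theta\,a_{\perp,j}+\sin(2\theta)\,c_j,\qquad c_j:=\sum_\lambda f_j(\lambda)\,\mathrm{Re}\langle u_\lambda,w_\lambda\rangle .
\]
Writing \(a_{0,0}-\mu_0=\delta_0\) and \(a_{\perp,0}-\mu_0=-\delta_0\), the \(j=0\) relation rearranges to
\[
c_0=-Z_\theta .
\]
So the datum \(a_{1,0}\) pins the first cross term. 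The goal then becomes an upper bound on \(c_1\) given \(c_0\). It should take the form
\[
|c_1|\,\sin 2\theta\ \lesssim\ R_1\sqrt{1-Z_\theta^2/R_0^2},
\]
after collecting the diagonal pieces of \(a_{1,1}\) into \(\mu_1\) plus terms bounded by the same quantity.

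\emph{Geometric step.} Both cross terms are inner products against the same object, \(g(\lambda)=\mathrm{Re}\langle u_\lambda,w_\lambda\rangle\). This shared \(g\) is exactly where commutativity enters. Because \(\sum_\lambda g(\lambda)=\langle\psi_0|\psi_\perp\rangle=0\), I can replace \(f_j\) by \(f_j-\mu_j\) for free. I then work in the real Hilbert space \(L^2\) weighted by \(\pi(\lambda)=\tfrac12(\|u_\lambda\|^2+\|w_\lambda\|^2)\), which is a probability measure with \(\sum_\lambda\pi f_j=\mu_j\). Pointwise Cauchy--Schwarz gives \(|g|\le\pi\), so \(g/\pi\) has modulus at most one. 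The centered functions satisfy
\[
\|f_j-\mu_j\|_\pi^2=\sum\pi f_j^2-\mu_j^2\le\mu_j-\mu_j^2=R_j^2,
\]
using \(f_j^2\le f_j\). Together with \(|g|\le\pi\) this yields \(|c_j|\le R_j\) after normalization. To obtain the \(\sqrt{1-Z_\theta^2/R_0^2}\) refinement, I would form the \(3\times3\) Gram matrix of the vectors \(f_0-\mu_0\), \(f_1-\mu_1\) and a unit vector built from \(g/\pi\) and \(\sqrt{\pi}\). Positivity of this matrix shows that once \(c_0=-Z_\theta\) uses up a fraction \(Z_\theta/R_0\) of the unit vector's "direction", at most \(\sqrt{1-Z_\theta^2/R_0^2}\) of it remains for \(c_1\). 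This is the usual cosine-composition bound
\[
\cos\alpha_1\le\sin\alpha_0 .
\]
Substituting into \(\overline W_p=(a_{0,0}+a_{1,1})/2\) gives Eq.~\eqref{eq:commuting-average-bound}. When \(\mu_0\in\{0,1\}\) we have \(R_0=0\), so \(f_0\) is constant \(\pi\)-a.e., \(c_0=0\) carries no information, and the factor is replaced by \(1\).

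\emph{Main obstacle.} I expect the hard step to be making this Gram argument exactly tight with the stated constants. Two things need care. First, the diagonal contributions \(\cos^2\theta\,a_{0,1}+\sin^2\theta\,a_{\perp,1}\) must be absorbed so that exactly \(\mu_1\) and \(R_1\) appear. Second, I have to verify that the correlation constraint between \(f_0-\mu_0\) and \(f_1-\mu_1\) is genuinely the only coupling, since that coupling is what commutativity supplies. If the direct Gram bound turns out to be loose, my first fallback is to optimize \(c_1\) explicitly over \(f_1\in[0,1]\) with \(\pi\) and \(g\) fixed; this is a linear program whose solution is a threshold function. I would then maximize over \(g\) subject to the linear constraint \(c_0=-Z_\theta\), using a Lagrange multiplier argument to recover the square-root form.
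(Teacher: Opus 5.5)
There is a genuine gap, and it sits at the step you flagged as the main obstacle. First, the identity $c_0=-Z_\theta$ is false. With $\delta_0=a_{0,0}-\mu_0$, your own expansion gives $a_{1,0}-\mu_0=\cos2\theta\,\delta_0+\sin2\theta\,c_0$. Inserting this into the definition of $Z_\theta$ yields
\[
Z_\theta=\cos2\theta\,c_0-\sin2\theta\,\delta_0 .
\]
So $(a_{1,0}-\mu_0,Z_\theta)$ is the pair $(\delta_0,c_0)$ rotated by the angle $2\theta$; at $\theta=\pi/4$, for instance, $Z_\theta=-\delta_0$ is pure imbalance. Second, and more seriously, your Gram mechanism cannot produce the square root. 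Both $c_0$ and $c_1$ are correlations of $f_j-\mu_j$ with the same function $g/\pi$. Nothing constrains $\langle f_0-\mu_0,f_1-\mu_1\rangle_\pi$, so both can be saturated at once. The step $\cos\alpha_1\le\sin\alpha_0$ would need $\langle f_0-\mu_0,f_1-\mu_1\rangle_\pi\le0$, which is not available. Concretely, take a qubit with $\ket{\psi_0}=\ket{0}$, $\ket{\psi_\perp}=\ket{1}$ and $H_0=H_1=\ket{+}\!\bra{+}$. Then $\mu_j=R_j=c_j=1/2$, which violates the inequality $|c_1|\le R_1\sqrt{1-c_0^2/R_0^2}$ that your argument aims at. Your fallback inherits these problems. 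It optimizes $c_1$ rather than $\cos2\theta\,\delta_1+\sin2\theta\,c_1$, under the wrong constraint. With only $|g(\lambda)|\le\pi(\lambda)$ available, nothing stops $|g|=\pi$, so no refinement can appear.

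The missing ingredient is a second coordinate that competes with $g$ for a common pointwise budget. Set $d(\lambda):=\tfrac12(\|u_\lambda\|^2-\|w_\lambda\|^2)$. Since $\sum_\lambda d=\sum_\lambda g=0$, you get $\delta_j=\langle f_j-\mu_j,d/\pi\rangle_\pi$ and $c_j=\langle f_j-\mu_j,g/\pi\rangle_\pi$; this also absorbs the diagonal terms you could not handle. The sharpened pointwise Cauchy--Schwarz bound $g^2\le\|u_\lambda\|^2\|w_\lambda\|^2$ gives $d^2+g^2\le\pi^2$, not merely $|g|\le\pi$. Now rotate by $2\theta$: the functions $t:=(\cos2\theta\,d+\sin2\theta\,g)/\pi$ and $z:=(\cos2\theta\,g-\sin2\theta\,d)/\pi$ satisfy $t^2+z^2\le1$ pointwise, and
\[
a_{1,j}-\mu_j=\langle f_j-\mu_j,t\rangle_\pi,\qquad Z_\theta=\langle f_0-\mu_0,z\rangle_\pi .
\]
Commutativity enters only because $t$ and $z$ are the same functions for both settings. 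Two independent Cauchy--Schwarz steps then finish the proof. The first gives $Z_\theta^2\le R_0^2\|z\|_\pi^2$. The second gives $(a_{1,1}-\mu_1)^2\le R_1^2\|t\|_\pi^2\le R_1^2(1-\|z\|_\pi^2)\le R_1^2(1-Z_\theta^2/R_0^2)$. This is exactly the paper's proof: its disk coordinates $u_k,v_k$ and rotated $t_k,z_k$ are your $d/\pi$, $g/\pi$ and their rotations. Without the disk constraint you cannot even reach the endpoint bound. Estimating $\delta_1$ and $c_1$ separately gives only $|a_{1,1}-\mu_1|\le R_1(|\cos2\theta|+\sin2\theta)$, which exceeds $R_1$ for $\theta\neq\pi/4$.
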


The proof is given in Appendix~\ref{app:main-bound-proof}. 
This theorem is the classical average-work law for the three-preparation task. It gives a universal constraint on the work values realizable by any single device whose two Hamiltonian settings commute. 
Incompatible Hamiltonian settings can exceed this law while each branch remains governed by its own free-energy maximum.

\section{Optimal average-work advantage from incompatible Hamiltonians}
\label{sec:violation}

A violation of the commuting benchmark demonstrates quantum advantage in the work-extraction task: one device attains a larger average work than any device whose Hamiltonian settings commute. 
We quantify this excess as follows.

\begin{definition}[Average-work advantage]
\label{def:average-work-advantage}
For a fixed work-extraction task, the average-work advantage is \(\Delta_{\rm av}:=\overline W_p^{\,\rm q}-\overline W_p^{\,\rm cl}\), where \(\overline W_p^{\,\rm cl}\) is the optimum over all bounded commuting Hamiltonian implementations, and \(\overline W_p^{\,\rm q}\) is the value attainable by bounded Hamiltonian settings that need not commute.
\end{definition}

For the task of Sec.~\ref{sec:classical-devices}, Theorem~\ref{thm:commuting-average-bound} gives the classical law obeyed by every classical implementation. 
The reference averages \(\mu_0,\mu_1\) are retained as variational parameters rather than fixed to particular values. 
The classical benchmark is obtained by optimizing this law over the admissible work values of classical devices. 
To compare it with unrestricted incompatible Hamiltonians, we use the following one-setting envelope.

\begin{lemma}[Unrestricted one-setting envelope]
\label{lem:one-setting-envelope}
Let \(\ket{\psi_0},\ket{\psi_\perp}\) be an orthogonal reference pair, and let \(\ket{\phi}\) be a pure state in their span. 
Among all bounded Hamiltonians \(0\le H\le\id\) with reference average \(\mu=(\langle\psi_0|H|\psi_0\rangle+\langle\psi_\perp|H|\psi_\perp\rangle)/2\), the largest attainable expectation value on \(\ket{\phi}\) is \(\mu+\min\{\mu,1-\mu\}\). 
The optimization is over arbitrary-dimensional Hamiltonian implementations.
\end{lemma}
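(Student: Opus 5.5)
The plan is to reduce everything to the two-dimensional span \(V:=\mathrm{span}\{\ket{\psi_0},\ket{\psi_\perp}\}\). Let \(P\) be the orthogonal projector onto \(V\), and compress \(H\) to \(K:=PHP\), viewed as an operator on \(V\). Since \(\ket{\phi}\in V\), we have \(\langle\phi|H|\phi\rangle=\langle\phi|K|\phi\rangle\). The constraint \(0\le H\le\id\) on the arbitrary-dimensional space implies \(0\le K\le \id_V\). This is the only place where the ambient dimension enters, so it does not matter how large the device Hilbert space is.

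\textbf{Upper bound.} Because \(\{\ket{\psi_0},\ket{\psi_\perp}\}\) is an orthonormal basis of \(V\), the reference average is fixed by the trace of the compression:
\[
\Tr_V K=\langle\psi_0|H|\psi_0\rangle+\langle\psi_\perp|H|\psi_\perp\rangle=2\mu .
\]
Write the eigenvalues of \(K\) as \(\lambda_+\ge\lambda_-\ge0\), with \(\lambda_+\le1\) and \(\lambda_++\lambda_-=2\mu\). For the normalized vector \(\ket{\phi}\) this gives
\[
\langle\phi|K|\phi\rangle\le\lambda_+\le\min\{1,2\mu\}.
\]
Finally, \(\min\{1,2\mu\}=\mu+\min\{\mu,1-\mu\}\), which is the claimed envelope.

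\textbf{Attainability.} Let \(\ket{\phi^\perp}\) be the unit vector in \(V\) orthogonal to \(\ket{\phi}\), and set \(m:=\min\{1,2\mu\}\). Define
\[
H:=m\,\ket{\phi}\!\bra{\phi}+(2\mu-m)\,\ket{\phi^\perp}\!\bra{\phi^\perp},
\]
extended by zero on \(V^\perp\). Both coefficients lie in \([0,1]\), since \(0\le\mu\le1\), so \(0\le H\le\id\). Its trace on \(V\) is \(2\mu\), and trace is basis independent, so its reference average is exactly \(\mu\). It also yields \(\langle\phi|H|\phi\rangle=m\). Hence the bound is tight.

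No step here is a real obstacle. The only point needing care is that the trace identity holds because \(\ket{\psi_0},\ket{\psi_\perp}\) form an orthonormal basis of the span containing \(\ket{\phi}\). The bound is also independent of which \(\ket{\phi}\in V\) is chosen.
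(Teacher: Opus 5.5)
Your proof is correct and matches the paper's argument. The paper also compresses $H$ to the reference span, uses $\Tr_V K=2\mu$ with $0\le K\le\id_V$ to get $\lambda_{\max}(K)\le\min\{1,2\mu\}$, and attains this with an operator diagonal in $\{\ket{\phi},\ket{\phi^\perp}\}$ extended by zero; your single formula with $m=\min\{1,2\mu\}$ simply merges its two cases $2\mu\le1$ and $2\mu\ge1$.
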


The proof is a compression argument and is given in Appendix~\ref{app:one-setting-envelope}. 
Applying Lemma~\ref{lem:one-setting-envelope} independently to the two Hamiltonian settings gives the optimal incompatible value at reference averages \(\mu_0,\mu_1\). 
Writing \(\ell_j:=\min\{\mu_j,1-\mu_j\}\), this gives \(\overline W_p^{\,\rm q}=\frac12(\mu_0+\mu_1+\ell_0+\ell_1)\). 
Appendix~\ref{app:optimal-minimal-advantage} proves the optimal minimal-task advantage and shows that the optimum is attained at the balanced source point. 
The next result optimizes over the full minimal-task family. 
The symmetric source point and balanced reference averages are not assumptions; they are selected by the average-work advantage optimization.

\begin{theorem}[Optimal average-work advantage in the minimal task]
\label{thm:optimal-minimal-average-advantage}
For the minimal three-preparation, two-Hamiltonian task, the maximum average-work advantage is
\begin{equation}
\label{eq:max-average-advantage}
\Delta_{\rm av}^{\max}
=
\frac12\left(1-\frac1{\sqrt2}\right).
\end{equation}
The optimum is attained when the source-side constraints are saturated at the symmetric point: \(\eta_{0\perp}=1/2\) and \(\eta_{01}=\eta_{\perp 1}=\frac12(1+1/\sqrt2)\), equivalently \(\theta=\pi/4\), together with balanced reference averages \(\mu_0=\mu_1=1/2\). 
At this optimum, \(\overline W_p^{\,\rm q}=1\) and \(\overline W_p^{\,\rm cl}=\frac12(1+1/\sqrt2)\). 
The optimal incompatible Hamiltonian settings are the rank-one Hamiltonians \(H_0=\ket{\psi_0}\!\bra{\psi_0}\) and \(H_1=\ket{\psi_1}\!\bra{\psi_1}\).
\end{theorem}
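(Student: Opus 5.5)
The plan is to bound $\Delta_{\rm av}=\overline W_p^{\,\rm q}-\overline W_p^{\,\rm cl}$ from above for every admissible task parameter $(\theta,\mu_0,\mu_1)$, and then to show the bound is attained at $\theta=\pi/4$, $\mu_0=\mu_1=1/2$. The quantum side is already fixed: by Lemma~\ref{lem:one-setting-envelope}, applied separately to each setting, $\overline W_p^{\,\rm q}=\frac12(\mu_0+\mu_1+\ell_0+\ell_1)$ with $\ell_j=\min\{\mu_j,1-\mu_j\}$. The upper bound on $\Delta_{\rm av}$ therefore needs only a \emph{lower} bound on the classical optimum $\overline W_p^{\,\rm cl}$, which I will obtain from explicit commuting constructions. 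The matching lower bound on $\Delta_{\rm av}$ at the symmetric point will come from Theorem~\ref{thm:commuting-average-bound}.

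\emph{Classical construction.} By the source-side saturation argument (Appendix~\ref{app:main-bound-proof}), $\ket{\psi_0}$, $\ket{\psi_\perp}$ and $\ket{\psi_1}$ span a qubit subspace. In it, $\ket{\psi_0}$ and $\ket{\psi_\perp}$ have Bloch vectors $\pm\hat z$, and $\ket{\psi_1}$ has a Bloch vector $\hat v$ at angle $2\theta$ from $\hat z$. On this subspace I take
\[
H_0=\mu_0\id+\ell_0\,\hat n\cdot\vec\sigma,\qquad H_1=\mu_1\id+s\,\ell_1\,\hat n\cdot\vec\sigma,
\]
with a common unit axis $\hat n$ and a sign $s=\pm1$, and extend both by $0$ on the orthogonal complement. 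Both operators are diagonal in one basis, so they commute, and $\ell_j\le\min\{\mu_j,1-\mu_j\}$ guarantees $0\le H_j\le\id$. Because $\ket{\psi_0}$ and $\ket{\psi_\perp}$ are antipodal, their reference averages are exactly $\mu_j$. The scored average is
\[
\tfrac12\bigl[\mu_0+\mu_1+\hat n\cdot(\ell_0\hat z+s\ell_1\hat v)\bigr].
\]
Optimizing over $\hat n$, and choosing $s=\operatorname{sign}\cos 2\theta$, gives at least $\frac12\bigl[\mu_0+\mu_1+\sqrt{\ell_0^2+\ell_1^2+2\ell_0\ell_1|\cos2\theta|}\bigr]\ge\frac12\bigl[\mu_0+\mu_1+\sqrt{\ell_0^2+\ell_1^2}\bigr]$. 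Hence
\[
\Delta_{\rm av}\le\tfrac12\Bigl(\ell_0+\ell_1-\sqrt{\ell_0^2+\ell_1^2}\Bigr).
\]
The right-hand side is nondecreasing in each $\ell_j$, since its partial derivative is $1-\ell_j/\sqrt{\ell_0^2+\ell_1^2}\ge0$. It is therefore maximized at $\ell_0=\ell_1=1/2$, which gives $\frac12(1-1/\sqrt2)$ for every $\theta$.

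\emph{Tightness at the symmetric point.} At $\theta=\pi/4$ we have $\cos2\theta=0$ and $\sin2\theta=1$, so $Z_\theta=-(a_{0,0}-\mu_0)$. With $\mu_0=\mu_1=1/2$ we get $R_0=R_1=1/2$. Theorem~\ref{thm:commuting-average-bound} then gives
\[
\overline W_p^{\,\rm cl}\le\tfrac12\Bigl[\mu_0+\mu_1+\max_{|t|\le R_0}\Bigl(t+R_1\sqrt{1-t^2/R_0^2}\Bigr)\Bigr],
\]
where $t=a_{0,0}-\mu_0$. Parametrizing $t=R_0\sin\varphi$, the maximum is $\sqrt{R_0^2+R_1^2}=1/\sqrt2$, so $\overline W_p^{\,\rm cl}\le\frac12(1+1/\sqrt2)$. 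The construction above attains this value, so $\overline W_p^{\,\rm cl}=\frac12(1+1/\sqrt2)$ there. On the quantum side, $H_j=\ket{\psi_j}\!\bra{\psi_j}$ has reference average $1/2$ (since $|\langle\psi_0|\psi_1\rangle|^2=|\langle\psi_\perp|\psi_1\rangle|^2=1/2$) and gives $a_{0,0}=a_{1,1}=1$. This yields $\overline W_p^{\,\rm q}=1$ and $\Delta_{\rm av}=\frac12(1-1/\sqrt2)$.

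The step I expect to need most care is the classical lower bound. One must check that restricting to the qubit span loses nothing: the source constraints force the three preparations into a common two-dimensional span, and the zero extension keeps $0\le H_j\le\id$. One must also handle the sign choice $s$ when $\cos2\theta<0$. Once these are checked, the argument is a monotonicity bound followed by a one-variable trigonometric maximization.
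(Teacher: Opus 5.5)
Your proposal is correct and follows the paper's proof essentially step for step: the one-setting envelope bounds the incompatible side, the shared-axis commuting construction $\mu_j\id+\ell_j s_j\,u\cdot\sigma$ lower-bounds the classical optimum, monotonicity in $\ell_j$ locates the maximum, and Theorem~\ref{thm:commuting-average-bound} at $\theta=\pi/4$ gives tightness. The one difference is that you drop the $2\ell_0\ell_1|\cos2\theta|$ term early. The paper keeps that term, which makes the bound strict for $\theta\neq\pi/4$ and so shows the symmetric point is the unique maximizer, not merely a point where the optimum is attained.
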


The proof is given in Appendix~\ref{app:optimal-minimal-advantage}. 
The theorem shows that the symmetric saturation of the source-side constraints and the balanced reference averages are selected by the average-work optimization itself. 
They are outputs of the optimization rather than inputs to the classical benchmark. 
The next question is robustness: how much noise can the Hamiltonian settings tolerate before the advantage disappears? 
We first analyze the robustness of the minimal setup, then develop a hierarchy of work-extraction tasks in which larger source families and more Hamiltonian settings make the incompatibility-based average-work violation increasingly robust.

\section{Robustness and hierarchy of average-work tasks}
\label{sec:robustness-hierarchy}

The optimized minimal task gives an exact separation for ideal Hamiltonian settings. 
We now ask whether this separation survives imperfect settings. 
The work-extraction task itself is kept fixed: the supplied preparations remain pure, obey the same pairwise maximal-energy source constraints, and are sampled with the same task prior. 
Only the Hamiltonian settings used by the incompatible implementation are degraded.

We analyze the explicit qubit implementation that attains the advantage. 
For this implementation, each target rank-one setting is replaced by the depolarized setting \(H_x^{(v)}=v\ket{\psi_x}\!\bra{\psi_x}+(1-v)\id_2/2\), with \(0\le v\le1\). 
Here \(\id_2\) is the identity on the two-dimensional source span. 
This depolarized Hamiltonian setting interpolates between the target rank-one setting and the state-independent setting. 
For a scored preparation paired with its corresponding Hamiltonian setting, the average work decreases linearly with the visibility.

For the optimized minimal task of Theorem~\ref{thm:optimal-minimal-average-advantage}, the noisy qubit implementation reaches the commuting benchmark at \(v_c=1/\sqrt2\). 
Thus the minimal average-work advantage survives precisely above this visibility.

The minimal task is the first member of a qubit hierarchy. 
The hierarchy is built from pure qubit preparations associated with a finite family of Bloch-sphere directions. 
The task prior is uniform over the matched scored pairs \((x,x)\). 
The source is specified by the pairwise maximal-energy data of this pure preparation family. 
The reference data are chosen with the same average work scale for every setting, so that the hierarchy compares one fixed source family against one device.

\begin{theorem}[Hierarchy threshold from classical alignment]
\label{thm:hierarchy-threshold}
Let \(M_N=\{m_x\}_{x=1}^N\) be the Bloch-sphere directions of the source family. 
For the balanced qubit work-extraction task with uniform prior over the matched scored pairs \((x,x)\), define \(R(M_N):=\max_{\|u\|=1}N^{-1}\sum_{x=1}^N |u\cdot m_x|\). 
The exact classical benchmark is \(\overline W_p^{\rm cl}=\frac12(1+R(M_N))\). 
The noisy incompatible qubit implementation has \(\overline W_p^{\rm q}(v)=\frac12(1+v)\), and is advantageous exactly when \(v>R(M_N)\). 
Equivalently, the visibility threshold is \(v_c(M_N)=R(M_N)\).
\end{theorem}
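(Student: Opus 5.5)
The plan is to reduce every commuting implementation to a single qubit POVM acting on the source span. I then bound the matched-branch average by a one-direction alignment quantity, and finally exhibit a commuting implementation that saturates the bound. Throughout I read the balanced reference data as fixing, for every setting, the reference average over the source span to \(\tfrac12\). Concretely, with \(\Pi\) the projector onto the two-dimensional span of the source family, this means \(\tfrac12\Tr(\Pi H_y\Pi)=\tfrac12\), the same normalization as \(\mu_j=\tfrac12\) in Theorem~\ref{thm:optimal-minimal-average-advantage}.

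\emph{Quantum side.} This step is immediate. For \(H_x^{(v)}=v\ket{\psi_x}\!\bra{\psi_x}+(1-v)\id_2/2\) one has \(\langle\psi_x|H_x^{(v)}|\psi_x\rangle=v+(1-v)/2=\tfrac12(1+v)\) for every \(x\). The reference average is \(\tfrac12\Tr H_x^{(v)}=\tfrac12\), so the balanced constraint holds, and the uniform prior gives \(\overline W_p^{\rm q}(v)=\tfrac12(1+v)\).

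\emph{Classical upper bound.} Let \(\{H_y\}\) commute on an arbitrary Hilbert space \(\cH\), with \(0\le H_y\le\id\).
\begin{itemize}
\item By joint spectral decomposition (approximating by finitely many spectral projections if needed), write \(H_y=\sum_k\lambda_{y,k}P_k\), where \(\{P_k\}\) is one common projective resolution of the identity and \(\lambda_{y,k}\in[0,1]\).
\item Compress to the source span: \(E_k:=\Pi P_k\Pi\) is a qubit POVM. Write \(E_k=e_k(\id_2+n_k\cdot\sigma)\) with \(e_k\ge0\), \(\|n_k\|\le1\), \(\sum_k e_k=1\) and \(\sum_k e_k n_k=0\).
\item Then \(\langle\psi_x|H_x|\psi_x\rangle=\sum_k\lambda_{x,k}e_k(1+n_k\cdot m_x)\). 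The balanced constraint reads \(\sum_k\lambda_{x,k}e_k=\tfrac12\), which fixes the first part to \(\tfrac12\).
\item For the second part, \(\lambda_{x,k}\in[0,1]\) gives \(\sum_k\lambda_{x,k}e_k\,n_k\cdot m_x\le\sum_k e_k(n_k\cdot m_x)_+\).
\item Because \(\sum_k e_k n_k\cdot m_x=0\), the positive and negative parts balance. Hence \(\sum_k e_k(n_k\cdot m_x)_+=\tfrac12\sum_k e_k|n_k\cdot m_x|\).
\item Averaging over \(x\) and exchanging sums gives \(\tfrac1{2N}\sum_k e_k\sum_x|n_k\cdot m_x|\le\tfrac12\sum_k e_k\|n_k\|R(M_N)\le\tfrac12R(M_N)\).
\end{itemize}
Therefore \(\overline W_p^{\rm cl}\le\tfrac12(1+R(M_N))\), with no dimension assumption, since all dependence on \(\cH\) has been pushed into the POVM \(\{E_k\}\).

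\emph{Attainability.} Let \(u\) attain the maximum defining \(R(M_N)\), and let \(\Pi_{\pm u}\) be the qubit projectors onto the \(\pm u\) Bloch directions. Set \(H_x=\Pi_{+u}\) if \(u\cdot m_x\ge0\) and \(H_x=\Pi_{-u}\) otherwise. These settings commute, satisfy \(0\le H_x\le\id\), and have reference average \(\tfrac12\). They give \(\langle\psi_x|H_x|\psi_x\rangle=\tfrac12(1+|u\cdot m_x|)\), so the bound is attained and \(\overline W_p^{\rm cl}=\tfrac12(1+R(M_N))\). Comparing this with \(\tfrac12(1+v)\) yields advantage exactly when \(v>R(M_N)\), so \(v_c(M_N)=R(M_N)\).

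The step I expect to be the main obstacle is the classical bound itself. The key is to exploit \(\sum_k e_k n_k=0\) together with the fixed reference average, so that the \(\lambda\)-optimization collapses to half the absolute alignment. The fine point there is that the balanced constraint is used only to pin the constant term \(\tfrac12\), while the inequality \(\lambda\le1\) alone controls the fluctuating term. I would also check that the paper's notion of ``balanced reference data'' for this hierarchy indeed amounts to \(\Tr(\Pi H_y\Pi)=1\) for every setting; if instead it is specified through orthogonal reference pairs per direction, the same compression argument applies after rewriting the pair average as a trace over the span.
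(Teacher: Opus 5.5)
Your proposal is correct and follows essentially the same route as the paper: your compressed POVM data $(e_k,n_k)$ are the paper's weights and source Bloch coordinates $(w_k,r_k)$ from the simultaneous decomposition. The balanced condition pins the constant term to $\tfrac12$, the zero-mean condition $\sum_k e_k n_k=0$ turns the fluctuating term into an alignment sum, and your attaining device is the paper's two-valued commuting device along the optimal direction $u_*$. The only difference is cosmetic: you bound each $x$-term by positive parts and then use the definition of $R(M_N)$ directly, whereas the paper bounds $\bigl\|\sum_x A_x(k)m_x\bigr\|$ by the maximal signed sum and then invokes Euclidean-norm duality.
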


The proof is given in Appendix~\ref{app:hierarchy-threshold}. 
The quantity \(R(M_N)\) is the best average alignment available to any classical work device for the task. 
Robustness is therefore controlled by how poorly any mutually commuting Hamiltonian family can align with the whole source family.

The minimal task gives the first separation, with visibility threshold \(v_c=1/\sqrt2\). 
The hierarchy improves this threshold by enlarging the thermodynamic source family. 
For evenly spaced equatorial source families, the limiting threshold is \(v_c=2/\pi\). 
For source families approaching the full Bloch-sphere limit, the threshold reaches \(v_c=1/2\), so the ideal full Bloch-sphere limiting average-work gap is \(1/4\). 
Thus robustness is controlled by a geometric obstruction: a classical work device must align one commuting Hamiltonian family with the whole source family, while incompatible Hamiltonians can align setting by setting.

\section{Discussion}
\label{sec:discussion}

We have derived a thermodynamic law for average work at the level of a task. 
For a single branch, there is no quantum-over-classical separation of the kind studied here: once the state, Hamiltonian, and bath are fixed, free energy fixes the reversible work. 
The new question begins only when one device is used across several supplied preparations and Hamiltonian settings. 
We compare devices at the branch-wise thermodynamic envelope, so every branch is assigned its best free-energy-limited value. 
The resulting separation is therefore not a comparison between an efficient quantum protocol and an inefficient classical one. 
It is a limitation imposed by mutual commutativity of the Hamiltonian settings themselves.

The form of the law is forced by thermodynamics. 
A meaningful average-work bound cannot be stated only in terms of the scored branches. 
A larger value could otherwise come from a larger source energy scale, from branch information encoded in the preparations, or from arbitrary shifts of the Hamiltonian settings. 
The reference quantities appearing in the classical law are therefore not auxiliary parameters. 
They are the thermodynamic benchmark data that make the comparison meaningful: the source-side energy constraints determine the allowed source geometry, while the setting reference values fix the work scale. 
Once these data are fixed, commutativity imposes the classical average-work bound.

The source condition used here is purely thermodynamic. 
The source is specified through pairwise maximal-energy constraints under one common normalized Hamiltonian for each pair of preparations. 
For pure preparations, these constraints determine the transition amplitudes and hence the geometry entering the task. 
Thus both sides of the comparison are expressed in the same physical currency: normalized Hamiltonian-energy expectations. 
This is not a dimension assumption, not a distinguishability assumption, and not an operational-equivalence assumption written in different notation. 
Its role is structurally analogous to source constraints in prepare-and-measure theory, where nontrivial black-box classical limits require restrictions on what the source can carry, such as indistinguishability, parity-obliviousness, operational classicality, or semidevice-independent source/channel assumptions \cite{ChaturvediSaha2020QuantumPrescriptions,SahaChaturvedi2019PreparationContextuality,Sarkar2023OperationalClassicality,SarkarDatta2025SDIResources,RajPrasadChaturvedi2026WPD,ChaturvediPawlowskiSaha2026Epistemic}. 
Here the restriction is energetic from the start: it is a trusted thermodynamic constraint on the supplied preparations, while the work device itself remains unrestricted except for the classical compatibility condition being tested.

Mathematically, before the pure-state reduction is used, the source-device problem belongs to the landscape of tracial noncommuting polynomial optimization: one optimizes trace polynomials in noncommuting positive operators subject to normalization, boundedness, thermodynamic source constraints, and, for the classical benchmark, commutativity of the Hamiltonian settings. 
Such noncommuting-operator optimization is the standard backbone behind semidefinite approaches to quantum contextuality and prepare-and-measure correlations \cite{ChaturvediFarkasWright2021ContextualBehaviours,TavakoliCruzeiroUolaAbbott2021ContextualCorrelations,HazraSahaChaturvediBeraMajumdar2026NoncontextualPolytopes}. 
The present pure-state thermodynamic instance is special because it does not remain a numerical hierarchy. 
The pairwise maximal-energy constraints collapse the source geometry exactly, and the commuting average-work law can be solved in closed form.

Purity is the remaining source-side boundary of the present law. 
It is not a restriction on the work device, which remains arbitrary-dimensional and microscopically unrestricted. 
Its role is to make the pairwise thermodynamic data complete: for pure preparations, maximal pairwise energy fixes the transition amplitudes. 
For mixed preparations, the same data no longer determine the source geometry, so a stronger thermodynamic source primitive is needed. 
Removing purity while keeping the source assumptions energetic, rather than dimensional or informational, is the natural next step.

The work device is treated at the same primitive level. 
Each query supplies one bounded Hamiltonian setting \(0\le H_y\le\id\), and the task uses the corresponding energy expectation \(\Tr(\rho_xH_y)\). 
No engine model, dimension bound, auxiliary outcome structure, or microscopic protocol is imposed. 
The benchmark is the optimum over all such devices satisfying the same thermodynamic reference data and the classical compatibility condition. 
This is a trusted-source, work-side black-box comparison: the preparation side is constrained by thermodynamic energy data, while the Hamiltonian side is optimized over arbitrary-dimensional implementations, in the same methodological spirit in which measurement-device-independent protocols keep the measurement side uncharacterized while retaining a trusted preparation layer \cite{ChaturvediBanik2015MDIRandomness,ArgillanderSpegelLexne2025MDIQRNG}. 
This is why the result is a single-device thermodynamic law rather than a model-dependent engine inequality.

The bounded-Hamiltonian assumption is a normalization of the work scale, not a microscopic spectral restriction. 
For any lower-bounded Hamiltonian with a finite task-relevant energy scale, one may subtract the irrelevant energy origin and divide the work, heat, and internal-energy contribution of the branch by that reference scale. 
This maps the comparison to the normalized form \(0\le H_y\le\id\) on the energy window used by the task, while preserving the question of converting supplied energy into extractable work. 
If no finite reference scale is fixed, no finite dimensionless average-work law can compare two devices in the first place; the normalization is therefore part of stating a thermodynamic limit, not an extra model assumption.

The connection to communication and prepare-and-measure tasks is precise, but it is one-way unless extra operational structure is added. 
Our thermodynamic primitive is smaller than a full communication experiment: for each setting \(y\), the work device supplies one bounded Hamiltonian \(0\le H_y\le\id\), and the task uses only the energy expectation \(\Tr(\rho_xH_y)\). 
Thus, in a black-box communication reading, each thermodynamic setting corresponds only to one scored binary receiver test on Bob's side. 
The complementary operator \(\id-H_y\) may be introduced if one wants to embed the thermodynamic number into a normalized binary measurement, but this complement is not used in the work law and carries no independent thermodynamic role in the theorem. 
From the thermodynamic task one obtains a single scored receiver event per setting, not an arbitrary multi-outcome measurement, not an instrument, and not a full prepare-and-measure behavior. 
Conversely, mapping a general communication or prepare-and-measure task into a thermodynamic work task requires additional assumptions that are not thermodynamic primitives. 
In contextual prepare-and-measure formulations, these assumptions are supplied by operational equivalences among preparations and measurement effects \cite{SchmidSpekkensWolfe2018,ChaturvediFarkasWright2021ContextualBehaviours,SahaChaturvedi2019PreparationContextuality}. 
In semidevice-independent resource-detection formulations, one instead fixes a prepare-and-measure architecture, such as a channel, dimension, or instrument structure connecting preparation and measurement devices \cite{SarkarDatta2025SDIResources}. 
Those frameworks are useful, but they are stronger than the primitive work setting considered here. 
The average-work law is derived before such completions are imposed: it uses source energy constraints, bounded Hamiltonian settings, and commutativity alone.

The hierarchy shows that the advantage is robust under imperfect Hamiltonian settings. 
The minimal task gives the first separation, with visibility threshold \(v_c=1/\sqrt2\). 
For evenly distributed equatorial source families, the limiting threshold becomes \(2/\pi\). 
In the full Bloch-sphere limit, it reaches \(1/2\), equivalently an ideal limiting average-work gap of \(1/4\). 
The mechanism is simultaneous alignment: a commuting device must align one compatible family of Hamiltonian settings with the whole source family, while incompatible Hamiltonians can align setting by setting. 
Increasing the source family therefore strengthens the witness of Hamiltonian incompatibility.

This resource is distinct from coherence relative to a fixed Hamiltonian. 
Coherence in one Hamiltonian eigenbasis is already known to contribute to nonequilibrium thermodynamics and entropy production \cite{LostaglioJenningsRudolph2015,LostaglioKorzekwaJenningsRudolph2015,SantosCeleriLandiPaternostro2019}. 
Here the resource is incompatibility among the Hamiltonian settings accessible to one work device. 
It is witnessed against the full class of arbitrary-dimensional commuting implementations, while every branch remains bounded by its own free-energy law. 
The endpoint is a device-independent thermodynamic statement: a work advantage certified from correlations alone.\footnote{
This Bell-certified formulation will appear on arXiv in a separate manuscript. 
It uses the thermodynamic law developed here as the primitive work benchmark and upgrades the certification layer from source-trusted thermodynamic constraints to observed correlations, yielding a theory-independent certification of thermodynamic work advantage.
}

\section*{Acknowledgements}
We thank Borhan Ahmadi, Tanmoy Biswas, Shubhayan Sarkar, Micha\l{} Horodecki, Manik Banik, Pawe\l{} Mazurek, and Armin Tavakoli for insightful discussions. 
S.R., A.B.R., and P.H. acknowledge support from the IRA Programme, project no. FENG.02.01-IP.05-0006/23, financed by the FENG programme 2021-2027, Priority FENG.02, Measure FENG.02.01, with the support of the Foundation for Polish Science. 
A.C. acknowledges support from the KLAR Grant No. BNI/PST/2023/1/00013/U/00001, funded by NAWA.

\bibliographystyle{apsrev4-2}
\bibliography{main}

@article{Alicki1979,
  author  = {Alicki, Robert},
  title   = {The quantum open system as a model of the heat engine},
  journal = {Journal of Physics A: Mathematical and General},
  volume  = {12},
  number  = {5},
  pages   = {L103--L107},
  year    = {1979},
  doi     = {10.1088/0305-4470/12/5/007},
  url     = {https://doi.org/10.1088/0305-4470/12/5/007}
}

@article{Spohn1978,
  author  = {Spohn, Herbert},
  title   = {Entropy production for quantum dynamical semigroups},
  journal = {Journal of Mathematical Physics},
  volume  = {19},
  number  = {5},
  pages   = {1227--1230},
  year    = {1978},
  doi     = {10.1063/1.523789},
  url     = {https://doi.org/10.1063/1.523789}
}

@misc{AlickiHorodecki2004,
  author        = {Alicki, Robert and Horodecki, Micha{\l} and Horodecki, Pawe{\l} and Horodecki, Ryszard},
  title         = {Thermodynamics of quantum informational systems: Hamiltonian description},
  year          = {2004},
  eprint        = {quant-ph/0402012},
  archivePrefix = {arXiv},
  primaryClass  = {quant-ph},
  doi           = {10.48550/arXiv.quant-ph/0402012},
  url           = {https://arxiv.org/abs/quant-ph/0402012}
}

@article{HorodeckiOppenheim2013,
  author  = {Horodecki, Micha{\l} and Oppenheim, Jonathan},
  title   = {Fundamental limitations for quantum and nanoscale thermodynamics},
  journal = {Nature Communications},
  volume  = {4},
  pages   = {2059},
  year    = {2013},
  doi     = {10.1038/ncomms3059},
  url     = {https://doi.org/10.1038/ncomms3059}
}

@article{SkrzypczykShortPopescu2014,
  author  = {Skrzypczyk, Paul and Short, Anthony J. and Popescu, Sandu},
  title   = {Work extraction and thermodynamics for individual quantum systems},
  journal = {Nature Communications},
  volume  = {5},
  pages   = {4185},
  year    = {2014},
  doi     = {10.1038/ncomms5185},
  url     = {https://doi.org/10.1038/ncomms5185}
}

@article{Aberg2013,
  author  = {{\AA}berg, Johan},
  title   = {Truly work-like work extraction via a single-shot analysis},
  journal = {Nature Communications},
  volume  = {4},
  pages   = {1925},
  year    = {2013},
  doi     = {10.1038/ncomms2712},
  url     = {https://doi.org/10.1038/ncomms2712}
}

@article{BrandaoHorodeckiNgOppenheimWehner2015,
  author  = {Brand{\~a}o, Fernando G. S. L. and Horodecki, Micha{\l} and Ng, Nelly H. Y. and Oppenheim, Jonathan and Wehner, Stephanie},
  title   = {The second laws of quantum thermodynamics},
  journal = {Proceedings of the National Academy of Sciences},
  volume  = {112},
  number  = {11},
  pages   = {3275--3279},
  year    = {2015},
  doi     = {10.1073/pnas.1411728112},
  url     = {https://doi.org/10.1073/pnas.1411728112}
}

@article{RichensMasanes2016,
  author  = {Richens, Jonathan G. and Masanes, Llu{\'i}s},
  title   = {Work extraction from quantum systems with bounded fluctuations in work},
  journal = {Nature Communications},
  volume  = {7},
  pages   = {13511},
  year    = {2016},
  doi     = {10.1038/ncomms13511},
  url     = {https://doi.org/10.1038/ncomms13511}
}

@article{SafranekRosaBinder2023,
  author  = {{\v{S}}afr{\'a}nek, Dominik and Rosa, Dario and Binder, Felix C.},
  title   = {Work Extraction from Unknown Quantum Sources},
  journal = {Physical Review Letters},
  volume  = {130},
  pages   = {210401},
  year    = {2023},
  doi     = {10.1103/PhysRevLett.130.210401},
  url     = {https://doi.org/10.1103/PhysRevLett.130.210401}
}

@article{WatanabeTakagi2026,
  author  = {Watanabe, Kaito and Takagi, Ryuji},
  title   = {Universal work extraction in quantum thermodynamics},
  journal = {Nature Communications},
  volume  = {17},
  pages   = {1857},
  year    = {2026},
  doi     = {10.1038/s41467-026-69143-3},
  url     = {https://doi.org/10.1038/s41467-026-69143-3}
}

@article{PuszWoronowicz1978,
  author  = {Pusz, Wies{\l}aw and Woronowicz, Stanis{\l}aw Lech},
  title   = {Passive states and {KMS} states for general quantum systems},
  journal = {Communications in Mathematical Physics},
  volume  = {58},
  number  = {3},
  pages   = {273--290},
  year    = {1978},
  doi     = {10.1007/BF01614224},
  url     = {https://doi.org/10.1007/BF01614224}
}

@book{BreuerPetruccione2002,
  author    = {Breuer, Heinz-Peter and Petruccione, Francesco},
  title     = {The Theory of Open Quantum Systems},
  publisher = {Oxford University Press},
  address   = {Oxford},
  year      = {2002},
}

@article{Lenard1978,
  author  = {Lenard, Andrew},
  title   = {Thermodynamical proof of the {Gibbs} formula for elementary quantum systems},
  journal = {Journal of Statistical Physics},
  volume  = {19},
  number  = {6},
  pages   = {575--586},
  year    = {1978},
  doi     = {10.1007/BF01011769},
  url     = {https://doi.org/10.1007/BF01011769}
}

@article{AllahverdyanBalianNieuwenhuizen2004,
  author  = {Allahverdyan, A. E. and Balian, R. and Nieuwenhuizen, Th. M.},
  title   = {Maximal work extraction from finite quantum systems},
  journal = {Europhysics Letters},
  volume  = {67},
  number  = {4},
  pages   = {565--571},
  year    = {2004},
  doi     = {10.1209/epl/i2004-10101-2},
  url     = {https://doi.org/10.1209/epl/i2004-10101-2}
}

@article{FrancicaBinderGuarnieriMitchisonGooldPlastina2020,
  author  = {Francica, Gianluca and Binder, Felix C. and Guarnieri, Giacomo and Mitchison, Mark T. and Goold, John and Plastina, Francesco},
  title   = {Quantum Coherence and Ergotropy},
  journal = {Physical Review Letters},
  volume  = {125},
  pages   = {180603},
  year    = {2020},
  doi     = {10.1103/PhysRevLett.125.180603},
  url     = {https://doi.org/10.1103/PhysRevLett.125.180603}
}

@article{BiswasLobejkoMazurekJalowieckiHorodecki2022,
  author  = {Biswas, Tanmoy and {\L}obejko, Marcin and Mazurek, Pawe{\l} and Ja{\l}owiecki, Konrad and Horodecki, Micha{\l}},
  title   = {Extraction of ergotropy: free energy bound and application to open cycle engines},
  journal = {Quantum},
  volume  = {6},
  pages   = {841},
  year    = {2022},
  doi     = {10.22331/q-2022-10-17-841},
  url     = {https://doi.org/10.22331/q-2022-10-17-841}
}

@article{BiswasDattaGarciaPintos2025,
  author  = {Biswas, Tanmoy and Datta, Chandan and Garc{\'i}a-Pintos, Luis Pedro},
  title   = {Quantum Thermodynamic Advantage in Work Extraction from Steerable Quantum Correlations},
  journal = {Physical Review Letters},
  volume  = {135},
  pages   = {110402},
  year    = {2025},
  doi     = {10.1103/9qcc-7lq5},
  url     = {https://doi.org/10.1103/9qcc-7lq5}
}

@misc{HsiehGessner2024General,
  author        = {Hsieh, Chung-Yun and Gessner, Manuel},
  title         = {General quantum resources provide advantages in work extraction tasks},
  year          = {2024},
  eprint        = {2403.18753},
  archivePrefix = {arXiv},
  primaryClass  = {quant-ph},
  doi           = {10.48550/arXiv.2403.18753},
  url           = {https://arxiv.org/abs/2403.18753}
}

@misc{GarciaPintosBiswasDatta2026Robustness,
  author        = {Garc{\'i}a-Pintos, Luis Pedro and Biswas, Tanmoy and Datta, Chandan},
  title         = {Robustness as a thermodynamic currency: work advantages and preparation costs of nonclassical states},
  year          = {2026},
  eprint        = {2603.04618},
  archivePrefix = {arXiv},
  primaryClass  = {quant-ph},
  doi           = {10.48550/arXiv.2603.04618},
  url           = {https://arxiv.org/abs/2603.04618}
}

@misc{BiswasDattaGarciaPintosCooling2025,
  author        = {Biswas, Tanmoy and Datta, Chandan and Garc{\'i}a-Pintos, Luis Pedro},
  title         = {All steerable quantum correlations can provide thermodynamic advantages in cooling},
  year          = {2025},
  eprint        = {2511.16999},
  archivePrefix = {arXiv},
  primaryClass  = {quant-ph},
  doi           = {10.48550/arXiv.2511.16999},
  url           = {https://arxiv.org/abs/2511.16999}
}

@article{Sarkar2023OperationalClassicality,
  author  = {Sarkar, Shubhayan},
  title   = {An Operational Notion of Classicality Based on Physical Principles},
  journal = {Foundations of Physics},
  volume  = {53},
  pages   = {47},
  year    = {2023},
  doi     = {10.1007/s10701-023-00687-w},
  url     = {https://doi.org/10.1007/s10701-023-00687-w}
}

@article{SarkarDatta2025SDIResources,
  author  = {Sarkar, Shubhayan and Datta, Chandan},
  title   = {Detecting quantum resources in a semi-device-independent framework},
  journal = {Physical Review A},
  volume  = {111},
  pages   = {L040402},
  year    = {2025},
  doi     = {10.1103/PhysRevA.111.L040402},
  url     = {https://doi.org/10.1103/PhysRevA.111.L040402}
}

@article{ChaturvediSaha2020QuantumPrescriptions,
  author    = {Chaturvedi, Anubhav and Saha, Debashis},
  title     = {Quantum prescriptions are ontologically more distinct than they are operationally distinguishable},
  journal   = {Quantum},
  volume    = {4},
  pages     = {345},
  year      = {2020},
  doi       = {10.22331/q-2020-10-21-345},
  url       = {https://doi.org/10.22331/q-2020-10-21-345},
  publisher = {Verein zur F{\"o}rderung des Open Access Publizierens in den Quantenwissenschaften}
}

@article{ChaturvediPawlowskiSaha2026Epistemic,
  author  = {Chaturvedi, Anubhav and Paw{\l}owski, Marcin and Saha, Debashis},
  title   = {Epistemic incompleteness of quantum theory},
  journal = {Physical Review A},
  volume  = {113},
  pages   = {042445},
  year    = {2026},
  doi     = {10.1103/wqhx-qfv1},
  url     = {https://doi.org/10.1103/wqhx-qfv1}
}

@article{LostaglioJenningsRudolph2015,
  author  = {Lostaglio, Matteo and Jennings, David and Rudolph, Terry},
  title   = {Description of quantum coherence in thermodynamic processes requires constraints beyond free energy},
  journal = {Nature Communications},
  volume  = {6},
  pages   = {6383},
  year    = {2015},
  doi     = {10.1038/ncomms7383},
  url     = {https://doi.org/10.1038/ncomms7383}
}

@book{KurizkiKofman2021,
  author    = {Kurizki, Gershon and Kofman, Abraham G.},
  title     = {Thermodynamics and Control of Open Quantum Systems},
  publisher = {Cambridge University Press},
  address   = {Cambridge},
  year      = {2021},
  doi       = {10.1017/9781316798454},
  isbn      = {9781107175419},
}

@article{LostaglioKorzekwaJenningsRudolph2015,
  author  = {Lostaglio, Matteo and Korzekwa, Kamil and Jennings, David and Rudolph, Terry},
  title   = {Quantum Coherence, Time-Translation Symmetry, and Thermodynamics},
  journal = {Physical Review X},
  volume  = {5},
  pages   = {021001},
  year    = {2015},
  doi     = {10.1103/PhysRevX.5.021001},
  url     = {https://doi.org/10.1103/PhysRevX.5.021001}
}

@article{SantosCeleriLandiPaternostro2019,
  author  = {Santos, Jader P. and C{\'e}leri, Lucas C. and Landi, Gabriel T. and Paternostro, Mauro},
  title   = {The role of quantum coherence in non-equilibrium entropy production},
  journal = {npj Quantum Information},
  volume  = {5},
  pages   = {23},
  year    = {2019},
  doi     = {10.1038/s41534-019-0138-y},
  url     = {https://doi.org/10.1038/s41534-019-0138-y}
}

@article{SchmidSpekkensWolfe2018,
  author  = {Schmid, David and Spekkens, Robert W. and Wolfe, Elie},
  title   = {All the noncontextuality inequalities for arbitrary prepare-and-measure experiments with respect to any fixed set of operational equivalences},
  journal = {Physical Review A},
  volume  = {97},
  pages   = {062103},
  year    = {2018},
  doi     = {10.1103/PhysRevA.97.062103},
  url     = {https://doi.org/10.1103/PhysRevA.97.062103}
}

@article{ChaturvediFarkasWright2021ContextualBehaviours,
  author  = {Chaturvedi, Anubhav and Farkas, M{\'a}t{\'e} and Wright, Victoria J.},
  title   = {Characterising and bounding the set of quantum behaviours in contextuality scenarios},
  journal = {Quantum},
  volume  = {5},
  pages   = {484},
  year    = {2021},
  doi     = {10.22331/q-2021-06-29-484},
  url     = {https://doi.org/10.22331/q-2021-06-29-484}
}

@article{TavakoliCruzeiroUolaAbbott2021ContextualCorrelations,
  author  = {Tavakoli, Armin and de Gois, Caique J. and Uola, Roope and Abbott, Alastair A.},
  title   = {Bounding and simulating contextual correlations in quantum theory},
  journal = {PRX Quantum},
  volume  = {2},
  pages   = {020334},
  year    = {2021},
  doi     = {10.1103/PRXQuantum.2.020334},
  url     = {https://doi.org/10.1103/PRXQuantum.2.020334}
}

@article{SahaChaturvedi2019PreparationContextuality,
  author  = {Saha, Debashis and Chaturvedi, Anubhav},
  title   = {Preparation contextuality as an essential feature underlying quantum communication advantage},
  journal = {Physical Review A},
  volume  = {100},
  pages   = {022108},
  year    = {2019},
  doi     = {10.1103/PhysRevA.100.022108},
  url     = {https://doi.org/10.1103/PhysRevA.100.022108}
}

@article{RajPrasadChaturvedi2026WPD,
  author  = {Raj, Chithra and Prasad, Tushita and Chaturvedi, Anubhav and Pollyceno, Lucas and Spegel-Lexne, Daniel and G{\'o}mez, Santiago and Argillander, Joakim and Alarc{\'o}n, Alvaro and Xavier, Guilherme B. and Paw{\l}owski, Marcin and Dieguez, Pedro R.},
  title   = {Certifying semi-device-independent security via wave-particle duality experiments},
  journal = {npj Quantum Information},
  volume  = {12},
  pages   = {7},
  year    = {2026},
  doi     = {10.1038/s41534-025-01160-4},
  url     = {https://doi.org/10.1038/s41534-025-01160-4}
}

@article{HazraSahaChaturvediBeraMajumdar2026NoncontextualPolytopes,
  author  = {Hazra, Soumyabrata and Saha, Debashis and Chaturvedi, Anubhav and Bera, Subhankar and Majumdar, A. S.},
  title   = {Efficient Computation of Generalized Noncontextual Polytopes and Quantum Violation of Their Facet Inequalities},
  journal = {Quantum},
  volume  = {10},
  pages   = {2015},
  year    = {2026},
  doi     = {10.22331/q-2026-03-09-2015},
  url     = {https://doi.org/10.22331/q-2026-03-09-2015}
}

@article{ChaturvediBanik2015MDIRandomness,
  author  = {Chaturvedi, Anubhav and Banik, Manik},
  title   = {Measurement-device-independent randomness from local entangled states},
  journal = {EPL (Europhysics Letters)},
  volume  = {112},
  pages   = {30003},
  year    = {2015},
  doi     = {10.1209/0295-5075/112/30003},
  url     = {https://doi.org/10.1209/0295-5075/112/30003}
}

@misc{ArgillanderSpegelLexne2025MDIQRNG,
  author        = {Argillander, Joakim and Spegel-Lexne, Daniel and Clason, Martin and Dieguez, Pedro R. and Paw{\l}owski, Marcin and Chaturvedi, Anubhav and Xavier, Guilherme B.},
  title         = {High-dimensional detection-loophole-free measurement-device-independent quantum random number generator},
  year          = {2025},
  eprint        = {2510.06317},
  archivePrefix = {arXiv},
  primaryClass  = {quant-ph},
  doi           = {10.48550/arXiv.2510.06317},
  url           = {https://arxiv.org/abs/2510.06317}
}

\clearpage
\appendix

\section{Free-energy bound on ergotropy}
\label{app:ergotropy-free-energy}

This appendix records why unitary work extraction cannot exceed the nonequilibrium free-energy difference \cite{PuszWoronowicz1978,Lenard1978,AllahverdyanBalianNieuwenhuizen2004,BiswasLobejkoMazurekJalowieckiHorodecki2022}. 
The point is that ergotropy compares \(\rho\) with its passive rearrangement \cite{AllahverdyanBalianNieuwenhuizen2004,FrancicaBinderGuarnieriMitchisonGooldPlastina2020}, while the free-energy benchmark compares \(\rho\) with the Gibbs state; Gibbs minimization makes the latter bound larger. 
This relation separates the single-branch free-energy bound from the task-level incompatibility effect studied in the main text.

\begin{lemma}[Free-energy bound on ergotropy]
\label{lem:ergotropy-free-energy}
Let \(\rho\) be a quantum state, let \(H\) be a Hamiltonian, and let
\[
\tau=\frac{e^{-H/T}}{\Tr(e^{-H/T})}
\]
be the Gibbs state at temperature \(T>0\). Define
\[
\mathcal F_T(\sigma,H):=\Tr(\sigma H)-T S(\sigma),
\qquad
S(\sigma):=-\Tr(\sigma\log\sigma).
\]
The ergotropy
\[
\mathcal E(\rho,H):=\Tr(\rho H)-\min_U\Tr(U\rho U^\dagger H)
\]
satisfies
\[
\mathcal E(\rho,H)\le \mathcal F_T(\rho,H)-\mathcal F_T(\tau,H).
\]
\end{lemma}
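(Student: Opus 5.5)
The plan is to compare the passive state with the Gibbs state through the Gibbs variational principle, using that unitaries preserve entropy. Let \(U_\star\) attain the minimum in the definition of ergotropy; it exists because the unitary group is compact and \(U\mapsto\Tr(U\rho U^\dagger H)\) is continuous. Write \(\pi:=U_\star\rho U_\star^\dagger\) for the resulting passive state. Then
\[
\mathcal E(\rho,H)=\Tr(\rho H)-\Tr(\pi H).
\]
Since \(\pi\) is unitarily equivalent to \(\rho\), the two states have the same spectrum, so \(S(\pi)=S(\rho)\). Adding and subtracting \(TS(\rho)=TS(\pi)\) gives the exact identity
\[
\mathcal E(\rho,H)=\mathcal F_T(\rho,H)-\mathcal F_T(\pi,H).
\]

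The bound then reduces to showing \(\mathcal F_T(\pi,H)\ge\mathcal F_T(\tau,H)\). This is the Gibbs minimization already recorded in Sec.~\ref{sec:absolute-law}. For completeness I will derive it from the relative-entropy identity
\[
\mathcal F_T(\sigma,H)-\mathcal F_T(\tau,H)=T\,D(\sigma\|\tau).
\]
To verify the identity, substitute \(\log\tau=-H/T-\log Z_H\):
\[
T\,D(\sigma\|\tau)=-TS(\sigma)+\Tr(\sigma H)+T\log Z_H=\mathcal F_T(\sigma,H)+T\log Z_H.
\]
Since \(\mathcal F_T(\tau,H)=-T\log Z_H\), this is the claimed identity. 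Klein's inequality gives \(D(\pi\|\tau)\ge0\), and combining this with the exact identity from the first paragraph yields
\[
\mathcal E(\rho,H)=\mathcal F_T(\rho,H)-\mathcal F_T(\tau,H)-T\,D(\pi\|\tau)\le\mathcal F_T(\rho,H)-\mathcal F_T(\tau,H).
\]
As a by-product, the slack is identified exactly as \(T\,D(\pi\|\tau)\). It vanishes iff the passive state is itself thermal at temperature \(T\).

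No step is a genuine obstacle. The only point requiring care is the finite-dimensional setting: the Gibbs state \(\tau\) must have full rank so that \(\log\tau\) and the relative entropy are well defined. This holds automatically since \(T>0\), and it is also what makes the minimizer \(U_\star\) exist by compactness. If an infinite-dimensional \(H\) were intended, I would additionally assume \(Z_H<\infty\) and replace the minimum over unitaries by an infimum approached by a sequence \(U_n\). In that case the same argument applies to each term \(\Tr(U_n\rho U_n^\dagger H)\), because \(D(U_n\rho U_n^\dagger\|\tau)\ge0\) holds for every \(n\), and the inequality survives passage to the infimum.
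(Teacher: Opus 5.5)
Your proof is correct and follows the paper's argument: take the passive state from the optimal unitary, use unitary invariance of entropy to write the ergotropy exactly as $\mathcal F_T(\rho,H)-\mathcal F_T(\pi,H)$, and conclude by Gibbs minimization of the free energy. The only difference is that the paper takes Gibbs minimization as its single thermodynamic input, whereas you derive it from the relative-entropy identity and Klein's inequality, which also identifies the slack exactly as $T\,D(\pi\|\tau)$.
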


\begin{proof}
Let \(\rho_p=U_*\rho U_*^\dagger\) be a passive state of \(\rho\) with respect to \(H\), where \(U_*\) attains
\[
\Tr(\rho_pH)=\min_U\Tr(U\rho U^\dagger H).
\]
Then
\[
\mathcal E(\rho,H)=\Tr(\rho H)-\Tr(\rho_pH).
\]
Since \(\rho_p\) is unitarily related to \(\rho\), one has \(S(\rho_p)=S(\rho)\). 
The entropy terms therefore cancel when comparing \(\rho\) with its passive rearrangement, giving
\[
\mathcal F_T(\rho,H)-\mathcal F_T(\rho_p,H)
=
\Tr(\rho H)-\Tr(\rho_pH)
=
\mathcal E(\rho,H).
\]
The only thermodynamic input is that the Gibbs state minimizes \(\mathcal F_T(\cdot,H)\) at fixed \(H\) and \(T\), so \(\mathcal F_T(\tau,H)\le \mathcal F_T(\rho_p,H)\). Therefore
\[
\mathcal E(\rho,H)
=
\mathcal F_T(\rho,H)-\mathcal F_T(\rho_p,H)
\le
\mathcal F_T(\rho,H)-\mathcal F_T(\tau,H),
\]
as claimed.
\end{proof}

\section{Protocol approaching the branch-wise free-energy bound}
\label{app:free-energy-saturation}

We describe an explicit reversible-limit protocol that approaches the free-energy benchmark for a single branch. 
This supports the statement that the branch-wise problem is closed by free energy; the advantage in the main text is therefore necessarily task-level. 
The proof has four steps: construct a full-rank Gibbs approximation to the input state, quench to its Hamiltonian, thermalize, and reversibly return to the branch Hamiltonian.

Fix one branch and suppress the labels \((x,y)\). The input state is \(\rho\), the Hamiltonian is \(H\), and
\[
\tau=\frac{e^{-H/T}}{\Tr(e^{-H/T})}
\]
is the Gibbs state. Suppose first that \(\rho\) is rank deficient. Write
\[
\rho=\sum_{i=1}^r r_i\ket{i}\!\bra{i},
\qquad
r_i>0,
\qquad
r=\operatorname{rank}(\rho)<d,
\]
and let
\[
\Pi_0:=\id-\sum_{i=1}^r\ket{i}\!\bra{i},
\qquad
k:=d-r.
\]
The obstruction for rank-deficient \(\rho\) is that no finite Hamiltonian has \(\rho\) as an exact Gibbs state. 
The cutoff Hamiltonian \(G_\Lambda\) makes the kernel of \(\rho\) very costly but finite. 
For a cutoff \(\Lambda>0\), define
\[
G_\Lambda:=\sum_{i=1}^r(-T\log r_i)\ket{i}\!\bra{i}+\Lambda\Pi_0 .
\]
Its Gibbs state is
\[
\gamma_\Lambda
=
\frac{e^{-G_\Lambda/T}}{\Tr(e^{-G_\Lambda/T})}
=
\frac{\rho+e^{-\Lambda/T}\Pi_0}{1+k e^{-\Lambda/T}}.
\]
Thus \(\gamma_\Lambda\) is full rank for every finite \(\Lambda\), and \(\gamma_\Lambda\to\rho\) as \(\Lambda\to\infty\).

Consider the following protocol.
\begin{enumerate}
\item \emph{Sudden quench.} Isolate the system and quench the Hamiltonian from \(H\) to \(G_\Lambda\). The state remains \(\rho\). The extracted work is
\[
W_1=\Tr(\rho H)-\Tr(\rho G_\Lambda).
\]
\item \emph{Thermalization at fixed \(G_\Lambda\).} Couple the system to the bath and let it relax from \(\rho\) to \(\gamma_\Lambda\) \cite{BreuerPetruccione2002,KurizkiKofman2021}. Since the Hamiltonian is fixed, no work is extracted in this step.
\item \emph{Reversible isothermal stroke.} Starting from \(\gamma_\Lambda\), perform a quasistatic isothermal transformation from \(G_\Lambda\) to \(H\). Let \(K_s\) be the instantaneous Hamiltonian, with \(K_0=G_\Lambda\) and \(K_1=H\), and let
\[
\omega_s=\frac{e^{-K_s/T}}{Z_s},
\qquad
Z_s=\Tr(e^{-K_s/T}).
\]
The state remains \(\omega_s\) throughout the quasistatic stroke.
\end{enumerate}

Along the reversible isothermal stroke,
\[
\mathcal F_T(\omega_s,K_s)=-T\log Z_s,
\]
and differentiation gives
\[
\frac{d}{ds}\mathcal F_T(\omega_s,K_s)=\Tr\!\left(\omega_s\frac{dK_s}{ds}\right).
\]
With extracted work positive, the work extracted during the stroke is
\[
W_2=-\int_0^1 ds\,\Tr\!\left(\omega_s\frac{dK_s}{ds}\right)
=
\mathcal F_T(\gamma_\Lambda,G_\Lambda)-\mathcal F_T(\tau,H).
\]
The total extracted work is
\[
W_\Lambda=\Tr(\rho H)-\Tr(\rho G_\Lambda)+\mathcal F_T(\gamma_\Lambda,G_\Lambda)-\mathcal F_T(\tau,H).
\]
Equivalently,
\[
W_\Lambda=
\mathcal F_T(\rho,H)-\mathcal F_T(\tau,H)
-
\Bigl[\mathcal F_T(\rho,G_\Lambda)-\mathcal F_T(\gamma_\Lambda,G_\Lambda)\Bigr].
\]
The only loss relative to the reversible free-energy value is the mismatch between \(\rho\) and the finite-temperature Gibbs state \(\gamma_\Lambda\) of the cutoff Hamiltonian. 
Since \(\gamma_\Lambda\) is the Gibbs state of \(G_\Lambda\),
\[
\mathcal F_T(\rho,G_\Lambda)-\mathcal F_T(\gamma_\Lambda,G_\Lambda)=T D(\rho\|\gamma_\Lambda).
\]
On the support of \(\rho\), the eigenvalues of \(\gamma_\Lambda\) are \(r_i/(1+k e^{-\Lambda/T})\), and hence
\[
D(\rho\|\gamma_\Lambda)=\log(1+k e^{-\Lambda/T}).
\]
Therefore
\[
W_\Lambda=
\mathcal F_T(\rho,H)-\mathcal F_T(\tau,H)-T\log(1+k e^{-\Lambda/T}),
\]
and
\[
W_\Lambda\xrightarrow[\Lambda\to\infty]{}
\mathcal F_T(\rho,H)-\mathcal F_T(\tau,H).
\]
For finite \(\Lambda\), the inequality is strict whenever \(k>0\). If \(\rho\) is full rank, one may take \(G=-T\log\rho\) up to an additive constant, and the same construction saturates the free-energy bound without the cutoff.

\begin{remark}
For rank-deficient states, exact saturation requires the limiting Hamiltonian \(G_\Lambda\) to develop diverging gaps on the kernel of \(\rho\). This is why the finite-\(\Lambda\) protocol approaches, but does not exactly reach, the free-energy bound.
\end{remark}

\section{Proof of the classical average-work law}
\label{app:main-bound-proof}

We prove Theorem~\ref{thm:commuting-average-bound}. 
The proof has two parts. 
First, we show that the source-side energy constraints force the three pure preparations used in the minimal task to have a two-dimensional representation. 
This is not a qubit assumption on the work device: the device Hilbert space and the commuting Hamiltonian implementation remain arbitrary-dimensional. 
Only the source geometry relevant to the three supplied preparations collapses to the two-dimensional reference span. 
Second, commutativity of \(H_0\) and \(H_1\) permits a simultaneous decomposition of the two Hamiltonian settings. 
At each point of this decomposition, the reference pair induces a Bloch-disk coordinate. 
A rotation of this disk isolates the component fixed by the \(H_0\)-values, and Cauchy--Schwarz bounds how large the \(H_1\)-value on \(\ket{\psi_1}\) can be.

We first derive the source geometry from the source-side constraints. 
For pure preparations,
\[
\eta_{ij}
=
\frac12\lambda_{\max}(\rho_i+\rho_j)
=
\frac{1+|\langle\psi_i|\psi_j\rangle|}{2}.
\]
The bound \(\eta_{0\perp}\le1/2\) therefore implies
\[
|\langle\psi_0|\psi_\perp\rangle|=0,
\]
so \(\ket{\psi_0}\) and \(\ket{\psi_\perp}\) are orthogonal. 
Let
\[
\Pi_{\rm ref}:=\ket{\psi_0}\!\bra{\psi_0}+\ket{\psi_\perp}\!\bra{\psi_\perp}
\]
be the projector onto their span. 
Write
\begin{align*}
a&:=\langle\psi_0|\psi_1\rangle,\\
b&:=\langle\psi_\perp|\psi_1\rangle,\\
\ket{r}&:=(\id-\Pi_{\rm ref})\ket{\psi_1}.
\end{align*}
The two lower source constraints give
\[
|a|\ge \cos\theta,
\qquad
|b|\ge \sin\theta .
\]
Since \(0<\theta<\pi/2\), both \(\cos\theta\) and \(\sin\theta\) are positive. 
By normalization of \(\ket{\psi_1}\),
\[
1
=
|a|^2+|b|^2+\|r\|^2
\ge
\cos^2\theta+\sin^2\theta+\|r\|^2
=
1+\|r\|^2 .
\]
Hence \(\|r\|=0\), and equality must hold throughout:
\[
|a|=\cos\theta,
\qquad
|b|=\sin\theta .
\]
Thus the source-side inequalities are saturated by normalization. 
After choosing phases of \(\ket{\psi_1}\) and \(\ket{\psi_\perp}\), we may write
\[
\ket{\psi_1}
=
\cos\theta\,\ket{\psi_0}
+
\sin\theta\,\ket{\psi_\perp}.
\]
This is the only reduction to a two-dimensional structure in the proof, and it is forced by the thermodynamic source constraints. 
No restriction is placed on the dimension of the work device or on the dimension of the commuting Hamiltonian implementation.

Set \(c=\cos\theta\), \(s=\sin\theta\), \(C=\cos2\theta\), and \(S=\sin2\theta\). 
For notational clarity, we write the proof for a discrete simultaneous decomposition of the two commuting settings. 
For general bounded commuting Hamiltonians, the same argument uses the standard measure representation for commuting operators, with sums replaced by integrals. 
If \(H_0,H_1\) commute, then in the discrete case they may be written as
\begin{align*}
H_0&=\sum_k\alpha_k\ket{k}\!\bra{k},
&
0&\le \alpha_k\le1,\\
H_1&=\sum_k\beta_k\ket{k}\!\bra{k},
&
0&\le \beta_k\le1 .
\end{align*}
Write
\[
\ket{\psi_0}=\sum_k x_k\ket{k},
\qquad
\ket{\psi_\perp}=\sum_k y_k\ket{k}.
\]
For every \(k\) with \(|x_k|^2+|y_k|^2>0\), define
\begin{align*}
w_k&:=\frac{|x_k|^2+|y_k|^2}{2},\\
u_k&:=\frac{|x_k|^2-|y_k|^2}{|x_k|^2+|y_k|^2},
&
v_k&:=\frac{2\operatorname{Re}(\overline{x_k}y_k)}{|x_k|^2+|y_k|^2}.
\end{align*}
When the denominator vanishes, set \(w_k=u_k=v_k=0\). 
The number \(w_k\) is the average probability weight of the reference pair at the decomposition point \(k\). 
The coordinate \(u_k\) records the imbalance between \(\ket{\psi_0}\) and \(\ket{\psi_\perp}\), while \(v_k\) records their real coherence in the simultaneous decomposition. 
The pair \((u_k,v_k)\) lies in the unit disk, since
\[
u_k^2+v_k^2\le1.
\]
Moreover,
\[
\sum_k w_k=1,
\qquad
\sum_k w_k u_k=0,
\qquad
\sum_k w_k v_k=0.
\]
The last two equalities follow from the normalization and orthogonality of the reference pair.

Rotate the disk coordinates by
\[
t_k:=Cu_k+Sv_k,
\qquad
z_k:=-Su_k+Cv_k.
\]
Then
\[
t_k^2+z_k^2\le1,
\qquad
\sum_k w_kt_k=0,
\qquad
\sum_k w_kz_k=0.
\]
Using \(\ket{\psi_1}=c\ket{\psi_0}+s\ket{\psi_\perp}\), one obtains
\[
|\langle k|\psi_1\rangle|^2
=
w_k(1+t_k).
\]

Now define
\[
A_k:=\alpha_k-\mu_0,
\qquad
B_k:=\beta_k-\mu_1,
\qquad
\mu_j:=\frac{a_{0,j}+a_{\perp,j}}{2}.
\]
Since
\[
\mu_0=\sum_k w_k\alpha_k,
\qquad
\mu_1=\sum_k w_k\beta_k,
\]
we have \(\sum_k w_kA_k=\sum_k w_kB_k=0\). 
The value of \(H_0\) on \(\ket{\psi_1}\) gives
\[
a_{1,0}-\mu_0
=
\sum_k w_kA_kt_k.
\]
The value of \(H_0\) on \(\ket{\psi_0}\) gives
\[
a_{0,0}-\mu_0
=
\sum_k w_kA_ku_k
=
C\sum_k w_kA_kt_k
-
S\sum_k w_kA_kz_k .
\]
Thus, with
\[
Z_\theta:=\sum_k w_kA_kz_k,
\]
one obtains
\[
Z_\theta
=
\frac{
\cos(2\theta)(a_{1,0}-\mu_0)-(a_{0,0}-\mu_0)
}{
\sin(2\theta)
},
\]
which is exactly the expression stated in Theorem~\ref{thm:commuting-average-bound}.

To see the first Cauchy--Schwarz step explicitly, use the weighted inner product
\[
\langle f,g\rangle_w:=\sum_k w_k f_kg_k .
\]
Then \(a_{1,0}-\mu_0=\langle A,t\rangle_w\) and \(Z_\theta=\langle A,z\rangle_w\). 
For any real numbers \(r,s\) with \(r^2+s^2=1\),
\[
r(a_{1,0}-\mu_0)+sZ_\theta
=
\langle A,rt+sz\rangle_w .
\]
By Cauchy--Schwarz,
\[
\langle A,rt+sz\rangle_w^2
\le
\langle A,A\rangle_w\,\langle rt+sz,rt+sz\rangle_w .
\]
Pointwise, \((rt_k+sz_k)^2\le t_k^2+z_k^2\le1\). 
Taking the supremum over \(r,s\) with \(r^2+s^2=1\) gives
\[
(a_{1,0}-\mu_0)^2+Z_\theta^2
\le
\sum_k w_kA_k^2 .
\]
Because \(0\le\alpha_k\le1\) and \(\sum_k w_k\alpha_k=\mu_0\),
\[
\sum_k w_kA_k^2
=
\sum_k w_k(\alpha_k-\mu_0)^2
\le
\mu_0(1-\mu_0)
=
R_0^2 .
\]
Therefore
\[
(a_{1,0}-\mu_0)^2+Z_\theta^2\le R_0^2.
\]
In particular, \(Z_\theta^2\le R_0^2\) when \(0<\mu_0<1\).

The second Cauchy--Schwarz inequality bounds the desired value of \(H_1\) on \(\ket{\psi_1}\). 
Since
\[
a_{1,1}-\mu_1
=
\sum_k w_kB_kt_k,
\]
we have
\[
(a_{1,1}-\mu_1)^2
\le
\left(\sum_k w_kB_k^2\right)
\left(\sum_k w_kt_k^2\right).
\]
Again using \(0\le\beta_k\le1\) and \(\sum_k w_k\beta_k=\mu_1\),
\[
\sum_k w_kB_k^2
\le
\mu_1(1-\mu_1)
=
R_1^2 .
\]
It remains to bound the available \(t\)-variance. 
From the definition of \(Z_\theta\),
\[
Z_\theta^2
\le
\left(\sum_k w_kA_k^2\right)
\left(\sum_k w_kz_k^2\right)
\le
R_0^2\sum_k w_kz_k^2 .
\]
Hence
\[
\sum_k w_kz_k^2
\ge
\frac{Z_\theta^2}{R_0^2}.
\]
Since \(t_k^2+z_k^2\le1\),
\[
\sum_k w_kt_k^2
\le
1-\sum_k w_kz_k^2
\le
1-\frac{Z_\theta^2}{R_0^2}.
\]
Combining the last three inequalities gives
\[
a_{1,1}
\le
\mu_1
+
R_1\sqrt{1-\frac{Z_\theta^2}{R_0^2}}.
\]
Adding \(a_{0,0}\) and dividing by two proves Eq.~\eqref{eq:commuting-average-bound}. 
If \(\mu_0\in\{0,1\}\), then \(R_0=0\), so the previous division by \(R_0\) is not used. 
In this endpoint case, boundedness forces \(a_{0,0}=a_{\perp,0}=\mu_0\), and the same argument with only \(\sum_k w_kt_k^2\le1\) gives the endpoint form stated in Theorem~\ref{thm:commuting-average-bound}.

\section{Proof of the one-setting envelope}
\label{app:one-setting-envelope}

This lemma bounds what a single unrestricted bounded Hamiltonian can do at a fixed reference average. 
Let
\[
P=\ket{\psi_0}\!\bra{\psi_0}+\ket{\psi_\perp}\!\bra{\psi_\perp}
\]
be the projector onto the two-dimensional reference span, and let \(B=PHP\) be the compression of \(H\) to this span. 
Since \(0\le H\le\id\), the compressed operator satisfies
\[
0\le B\le P .
\]
Equivalently, inside the reference span, \(B\) is a \(2\times2\) positive operator bounded above by the identity on that span.

The reference-average condition is
\[
\mu=\frac{\langle\psi_0|H|\psi_0\rangle+\langle\psi_\perp|H|\psi_\perp\rangle}{2}.
\]
Since \(\{\ket{\psi_0},\ket{\psi_\perp}\}\) is an orthonormal basis of the reference span, the trace of \(B\) on this span is
\begin{align*}
\Tr B
&=
\langle\psi_0|B|\psi_0\rangle+\langle\psi_\perp|B|\psi_\perp\rangle\\
&=
\langle\psi_0|H|\psi_0\rangle+\langle\psi_\perp|H|\psi_\perp\rangle\\
&=
2\mu .
\end{align*}
Thus \(0\le\mu\le1\), and \(2\mu\) may range from \(0\) to \(2\).

For any unit vector \(\ket{\phi}\) in the reference span,
\[
\langle\phi|H|\phi\rangle
=
\langle\phi|B|\phi\rangle
\le
\lambda_{\max}(B).
\]
Because \(0\le B\le P\), every eigenvalue of \(B\) is at most \(1\). 
Because \(\Tr B=2\mu\), no eigenvalue of \(B\) can exceed \(2\mu\). 
Therefore
\[
\lambda_{\max}(B)\le \min\{1,2\mu\}.
\]

The upper bound \(\lambda_{\max}(B)\le\min\{1,2\mu\}\) is attainable inside the reference span. 
If \(2\mu\le1\), choose
\[
B=2\mu\,\ket{\phi}\!\bra{\phi}.
\]
If \(2\mu\ge1\), choose
\[
B=\ket{\phi}\!\bra{\phi}
+
(2\mu-1)\ket{\phi^\perp}\!\bra{\phi^\perp},
\]
where \(\ket{\phi^\perp}\) is the unit vector in the reference span orthogonal to \(\ket{\phi}\). 
In both cases \(0\le B\le P\), \(\Tr B=2\mu\), and the value on \(\ket{\phi}\) equals \(\min\{1,2\mu\}\). 
Extending \(B\) by zero outside the reference span gives an admissible bounded Hamiltonian on the full Hilbert space.

Hence
\[
\max_H\langle\phi|H|\phi\rangle
=
\min\{1,2\mu\}
=
\mu+\min\{\mu,1-\mu\}.
\]
This proves Lemma~\ref{lem:one-setting-envelope}.

\section{Optimality of the minimal-task advantage}
\label{app:optimal-minimal-advantage}

We prove Theorem~\ref{thm:optimal-minimal-average-advantage}. 
The proof first upper-bounds the advantage of any minimal-task instance and then analyzes equality conditions. 
This order is important: the symmetric source and balanced work scale are derived, not assumed. 
The optimization is over the full minimal-task family: the source angle \(0<\theta<\pi/2\), the reference averages \(\mu_0,\mu_1\), all bounded Hamiltonian implementations without a commutativity restriction, and all bounded commuting implementations.

For fixed reference averages, define
\[
\ell_j:=\min\{\mu_j,1-\mu_j\},
\qquad j=0,1.
\]
By Lemma~\ref{lem:one-setting-envelope}, any bounded implementation, commuting or not, satisfies
\[
a_{0,0}\le \mu_0+\ell_0,
\qquad
a_{1,1}\le \mu_1+\ell_1.
\]
Hence every incompatible implementation obeys
\begin{equation}
\label{eq:q-envelope-absolute}
\overline W_p^{\,\rm q}
\le
\frac12(\mu_0+\mu_1+\ell_0+\ell_1).
\end{equation}

To upper-bound the quantum advantage, it is enough to show that every task instance has at least one sufficiently good commuting implementation. 
The following construction uses one shared direction \(u\) for both settings. 
We now lower-bound the performance that a commuting device can always achieve at the same reference averages. Let \(m_0,m_1\) be the Bloch vectors of \(\psi_0,\psi_1\) in the reference span, so that
\[
m_0\cdot m_1=\cos2\theta.
\]
For any unit vector \(u\) and signs \(s_0,s_1\in\{\pm1\}\), define
\[
K_j^{\rm cl}
=
\mu_j\id+\ell_j s_j\,u\cdot\sigma,
\qquad j=0,1.
\]
These two bounded Hamiltonians commute, since they are both functions of \(u\cdot\sigma\). Their scored average work is
\[
\overline W_p^{\,\rm cl}
=
\frac12\left[
\mu_0+\mu_1
+
\ell_0s_0\,u\cdot m_0
+
\ell_1s_1\,u\cdot m_1
\right].
\]
Optimizing over \(u\) and over the signs gives
\begin{equation}
\label{eq:commuting-lower-absolute}
\overline W_p^{\,\rm cl}
\ge
\frac12\left[
\mu_0+\mu_1+
\sqrt{
\ell_0^2+\ell_1^2+2\ell_0\ell_1|\cos2\theta|
}
\right].
\end{equation}
Therefore every advantage satisfies
\begin{align}
\Delta_{\rm av}
&=
\overline W_p^{\,\rm q}
-
\overline W_p^{\,\rm cl}
\nonumber\\
&\le
\frac12\left[
\ell_0+\ell_1
-
\sqrt{
\ell_0^2+\ell_1^2+2\ell_0\ell_1|\cos2\theta|
}
\right].
\label{eq:gap-upper-exact-form}
\end{align}
This is an upper bound on the average-work advantage for every instance of the minimal task.

The remaining expression is now purely geometric: it asks how far two positive lengths \(\ell_0,\ell_1\) can be separated from one shared classical direction. 
For fixed \(\ell_0,\ell_1\), the right-hand side of Eq.~\eqref{eq:gap-upper-exact-form} is maximized by minimizing the square-root term. Hence any maximizer must satisfy
\[
|\cos2\theta|=0,
\qquad\text{so}\qquad
\theta=\frac\pi4.
\]
At this source angle,
\[
\Delta_{\rm av}
\le
\frac12\left[
\ell_0+\ell_1-\sqrt{\ell_0^2+\ell_1^2}
\right].
\]
The function
\[
f(\ell_0,\ell_1):=\ell_0+\ell_1-\sqrt{\ell_0^2+\ell_1^2}
\]
is increasing in each variable for \(\ell_0,\ell_1\ge0\). Since \(\ell_j\le1/2\), the maximum is attained only at
\[
\ell_0=\ell_1=\frac12,
\qquad\text{equivalently}\qquad
\mu_0=\mu_1=\frac12.
\]
Thus imbalance in the reference averages cannot help: it reduces the maximum possible value before it can increase the separation. 
Substituting these forced optimal values gives the global upper bound
\begin{equation}
\label{eq:absolute-minimal-gap-upper}
\Delta_{\rm av}
\le
\frac12\left(1-\frac1{\sqrt2}\right).
\end{equation}
Thus the symmetric source angle and the balanced reference averages are outputs of the optimization, not assumptions.

It remains to show that the bound is attainable. At the forced point
\[
\theta=\frac\pi4,
\qquad
\mu_0=\mu_1=\frac12,
\]
take the incompatible rank-one Hamiltonian settings
\[
H_0=\ket{\psi_0}\!\bra{\psi_0},
\qquad
H_1=\ket{\psi_1}\!\bra{\psi_1}.
\]
They give
\[
\overline W_p^{\,\rm q}=1.
\]
For commuting Hamiltonian settings, Theorem~\ref{thm:commuting-average-bound} gives the matching upper bound at this forced point. Indeed, with \(\theta=\pi/4\) and \(\mu_0=\mu_1=1/2\), write \(x:=a_{0,0}-1/2\). Then \(Z_\theta=-x\), and the classical law gives
\[
\overline W_p^{\,\rm cl}
\le
\frac12\left[
1+x+\frac12\sqrt{1-4x^2}
\right]
\le
\frac12\left(1+\frac1{\sqrt2}\right),
\]
where the last inequality is the maximum over \(-1/2\le x\le1/2\). The commuting construction in Eq.~\eqref{eq:commuting-lower-absolute} attains this value. Therefore
\[
\overline W_p^{\,\rm cl}
=
\frac12\left(1+\frac1{\sqrt2}\right).
\]
Consequently
\[
\Delta_{\rm av}^{\max}
=
1-
\frac12\left(1+\frac1{\sqrt2}\right)
=
\frac12\left(1-\frac1{\sqrt2}\right).
\]
For pure preparations, the forced source point corresponds to
\[
\eta_{0\perp}=\frac12,
\qquad
\eta_{01}=\eta_{\perp1}
=
\frac12\left(1+\frac1{\sqrt2}\right).
\]
This proves Theorem~\ref{thm:optimal-minimal-average-advantage}.

\section{Proof of the hierarchy threshold}
\label{app:hierarchy-threshold}

We prove Theorem~\ref{thm:hierarchy-threshold}. 
The hierarchy is a qubit pure-source corollary of the thermodynamic framework. 
The source family is specified by pairwise maximal-energy data corresponding to pure qubit preparations with Bloch-sphere directions \(M_N=\{m_x\}_{x=1}^N\). 
For pure qubit preparations,
\[
\eta_{xx'}
=
\frac12\left(1+\sqrt{\frac{1+m_x\cdot m_{x'}}{2}}\right),
\]
so the pairwise maximal-energy data are thermodynamic data for the same source geometry. 
The relevant source span may be represented as a qubit with
\[
\psi_x=\frac12(\id_2+m_x\cdot\sigma),
\qquad \|m_x\|=1 .
\]
This is a restriction on the source family, not on the work device. 
The work device and its Hamiltonian settings remain arbitrary-dimensional. 
Only the compression of each Hamiltonian setting to the source span enters the task average.

The balanced hierarchy scores the uniformly weighted branches \((x,x)\), so
\[
\overline W_p=\frac1N\sum_{x=1}^N \Tr(\psi_x H_x).
\]
Let \(P\) be the projector onto the source qubit span and define the compressed Hamiltonian
\[
K_x:=PH_xP .
\]
Since \(0\le H_x\le\id\), one has \(0\le K_x\le P\). 
The balanced reference condition fixes the average of every compressed setting on the reference pair to \(1/2\), equivalently \(\Tr K_x=1\) on the two-dimensional source span. 
Identifying \(P\) with \(\id_2\) on this span, every such compressed Hamiltonian has the Bloch form
\[
K_x=\frac12(\id_2+a_x\cdot\sigma),
\qquad
\|a_x\|\le1 .
\]
The score becomes
\[
\overline W_p=\frac12+\frac{1}{2N}\sum_{x=1}^N m_x\cdot a_x .
\]

We first upper-bound every commuting implementation. Since the full Hamiltonian settings commute, use the same simultaneous decomposition as in Appendix~\ref{app:main-bound-proof}. The induced source statistics can be described by weights \(w_k\), source Bloch coordinates \(r_k\) with \(\|r_k\|\le1\), and Hamiltonian values \(h_x(k)\in[0,1]\). The reference decomposition satisfies \(\sum_k w_k r_k=0\). Balancedness gives \(\sum_k w_k h_x(k)=1/2\) for every \(x\). Define \(A_x(k):=2h_x(k)-1\), so \(|A_x(k)|\le1\). The excess over \(1/2\) is
\[
\overline W_p-\frac12
=
\frac{1}{2N}\sum_k w_k\, r_k\cdot\sum_{x=1}^N A_x(k)m_x .
\]
For every \(k\),
\begin{align*}
r_k\cdot\sum_x A_x(k)m_x
&\le
\left\|\sum_x A_x(k)m_x\right\| \\
&\le
\max_{s_x\in\{\pm1\}}\left\|\sum_x s_xm_x\right\|.
\end{align*}
For fixed \(r_k\), the best choice of \(h_x(k)\) follows the sign of each projection. 
Equivalently, the norm is convex in each coefficient \(A_x\in[-1,1]\), so the maximum is attained at an extreme point. Hence every commuting implementation satisfies
\[
\overline W_p
\le
\frac12+\frac{1}{2N}\max_{s_x=\pm1}\left\|\sum_{x=1}^N s_xm_x\right\|.
\]
By duality of the Euclidean norm,
\[
\max_{s_x=\pm1}\left\|\sum_x s_xm_x\right\|
=
\max_{\|u\|=1}\sum_x |u\cdot m_x|.
\]
Therefore
\[
\overline W_p^{\rm cl}
\le
\frac12\left(1+R(M_N)\right),
\]
where
\[
R(M_N):=\max_{\|u\|=1}\frac1N\sum_x |u\cdot m_x|.
\]

The upper bound is not merely a relaxation. It is attained by a two-value commuting device whose shared direction is the best alignment direction \(u_*\). Let \(u_*\) attain the maximum in \(R(M_N)\), and choose signs \(s_x=\operatorname{sgn}(u_*\cdot m_x)\), with arbitrary signs when the scalar product vanishes. Use the two-value commuting Hamiltonian decomposition with \(F_\pm=(\id\pm u_*\cdot\sigma)/2\), dilated if necessary. Equivalently, use a two-point commuting implementation with decomposition points \(k=\pm\), weights \(w_+=w_-=1/2\), and source Bloch coordinates \(r_+=u_*\), \(r_-=-u_*\). Define
\[
h_x(+)=\frac{1+s_x}{2},
\qquad
h_x(-)=\frac{1-s_x}{2}.
\]
Then each setting is bounded, all settings commute, and each has reference average \(1/2\). The achieved score is
\[
\overline W_p^{\rm cl}
=
\frac12+\frac{1}{2N}\sum_x |u_*\cdot m_x|
=
\frac12\left(1+R(M_N)\right).
\]
Thus the commuting benchmark is exact.

The noisy incompatible implementation uses
\[
H_x^{(v)}=v\psi_x+(1-v)\frac{\id_2}{2}.
\]
For the scored branch \((x,x)\),
\[
\Tr(\psi_x H_x^{(v)})=\frac12(1+v).
\]
Therefore
\[
\overline W_p^{\rm q}(v)=\frac12(1+v).
\]
The implementation is advantageous precisely when
\[
\frac12(1+v)>\frac12(1+R(M_N)),
\]
or equivalently
\[
v>R(M_N).
\]
Thus the visibility threshold is \(v_c(M_N)=R(M_N)\).

For evenly spaced equatorial directions modulo antipodes, the maximizing direction may be taken halfway between two neighboring directions or aligned with one direction, depending on parity. Summing the resulting absolute cosines gives the following identity. For directions at angles
\[
\phi_x=\frac{(x-1)\pi}{N},
\qquad x=1,\ldots,N,
\]
one has the finite trigonometric identity
\[
\max_\phi\sum_{x=1}^N |\cos(\phi-\phi_x)|
=
\csc\!\left(\frac{\pi}{2N}\right).
\]
The maximizer is aligned with one source direction for odd \(N\), and halfway between two adjacent directions for even \(N\). Hence
\[
v_c(N)=\frac1N\csc\!\left(\frac{\pi}{2N}\right)
\longrightarrow \frac2\pi .
\]
For directions approaching the full Bloch-sphere limit,
\[
R=\int_{S^2}\frac{d\Omega}{4\pi}|u\cdot m|
=\frac12,
\]
so the limiting threshold is \(v_c=1/2\). This proves Theorem~\ref{thm:hierarchy-threshold}.

\end{document}